\newcommand{\citet}[1]{\citeauthor{#1}~\shortcite{#1}}
\newcommand{\citep}{\cite}
\newcommand{\citealp}[1]{\citeauthor{#1}~\citeyear{#1}}
\newcommand{\beq}{\begin{eqnarray*}}
\newcommand{\eeq}{\end{eqnarray*}}
\newcommand{\beqn}{\begin{eqnarray}}
\newcommand{\eeqn}{\end{eqnarray}}
\newcommand{\hide}[1]{}
\newcommand{\hp}{\hat{\pi}}
\newcommand{\deviation}{\mathrm{Deviation}}
\newcommand{\btrain}{\bar{B}}
\newcommand{\bepf}{\begin{proof}}
\newcommand{\enpf}{\end{proof}}
\newtheorem{theorem}{Theorem}[section]
\newcommand{\figref}[1]{Figure \ref{fig:#1}}
\newcommand{\tabref}[1]{Table \ref{tab:#1}}        
\newcommand{\secref}[1]{Section \ref{sec:#1}}
\newcommand{\algoref}[1]{Alg.~\ref{alg:#1}}
\newcommand{\foldlogl}{\mathrm{FoldedLL}}
\def\P{\mathbb{P}}
\newcommand{\E}{\mathbb{E}}
\newcommand{\reals}{\mathbb{R}}
\newcommand{\half}{{\frac12}}
\newcommand{\one}{\mathbb{I}}
\newcommand{\norm}[1]{\|#1\|}
\newcommand{\dotprod}[1]{\langle #1 \rangle}
\DeclareMathOperator*{\argmin}{argmin}
\DeclareMathOperator*{\argmax}{argmax}
\newcommand{\trn}{^{\mathsf{T}}}
\newcommand{\svd}{\operatorname{SVD}}
\newcommand{\logl}{\mathrm{LL}}
\newcommand{\fold}{_\mathrm{fold}}
\newcommand{\spn}{_\mathrm{sp}}
\newcommand{\tno}{_\mathrm{tr}}
\newcommand{\ber}{_\mathrm{Ber}}
\newcommand{\paren}[1]{\left( #1 \right)}
\newcommand{\pt}{\pi_t}
\newcommand{\pnull}{{\bar\pi}}
\newcommand{\phat}{{\hat\pi}}
\newcommand{\mse}{\mathrm{MSE}}
\newcommand{\cD}{\mathcal{D}}
\newcommand{\dataandcode}{\url{https://github.com/eyalgut/TLR\_anomaly\_detection.git}}
\newcommand{\kaggleurl}{\url{https://www.kaggle.com/eyalgut/binary-traffic-matrices}}
\title{Temporal Anomaly Detection: Calibrating the Surprise}
\author{Eyal Gutflaish,\textsuperscript{1}
Aryeh Kontorovich,\textsuperscript{1}
Sivan Sabato,\textsuperscript{1}  
Ofer Biller,\textsuperscript{2}
Oded Sofer\textsuperscript{2}\\
\textsuperscript{1} Ben-Gurion University of the Negev, Beer Sheva, Israel\\
\textsuperscript{2} IBM Security Division, Israel \\
eyalgutf@gmail.com,
\{karyeh, sabatos\}@cs.bgu.ac.il,
\{ofer.biller, odedso\}@il.ibm.com}
\begin{document}
 
\maketitle

\begin{abstract}
We propose a hybrid approach to temporal anomaly detection
in access data of users to databases --- or more generally, any kind of subject-object co-occurrence data. We consider a high-dimensional setting that also requires fast computation at test time. Our methodology identifies anomalies based on a single stationary model, instead of requiring a full temporal one, which would be prohibitive in this setting. We learn a low-rank stationary model from the training data, and then fit a regression model for predicting the expected likelihood score of normal access patterns in the future. The disparity between the predicted likelihood score
and the observed one is used to assess the ``surprise'' at test time.
This approach enables calibration of the anomaly score, so that time-varying normal behavior patterns are not considered anomalous. We provide a detailed description of the algorithm, including a convergence analysis, and report encouraging empirical results. One of the data sets that we tested, TDA, is new for the public domain. It consists of two months' worth of database access records from a live system. 
Our code is publicly available at \dataandcode. The TDA data set is available at \kaggleurl.
\end{abstract}

\newcommand{\tableswitch}{\begin{table}[h]

{\centering
		\resizebox{\columnwidth}{!}{%
\begin{tabular}{ccccccc}
      	Data Set  & TLR & Mean &SRMF& PCA&RobustPCA& MET  \\
\toprule
	
TDA & $\mathbf{39}\%$& $7\%$ &-&-&- &-\\
Amazon   & $\mathbf{84}\%$& $5\%$&-&-&-&- \\
Netflix & $\mathbf{49}\%$ & 8\% &-&-&- &-\\
Movielens  & $\mathbf{16}\%$& $8\%$&-&-&- &-\\
	TDA (small)  &$\mathbf{22}\%$& $3\%$ & $6\%$&$5\%$ &$6\%$ &$5\%$ \\
	Amazon (small) &$\mathbf{99}\%$& $7\%$ & $82\%$ &$5\%$ &$7\%$ & $7\%$ \\
\bottomrule\\
\end{tabular}}
\caption{Percent of tested intervals which were assigned an anomaly score above 0.95 of the full and sampled date sets.}
\label{tab:switch}
}
\end{table}
}

	\newcommand{\figurenoise}{\begin{figure}[h] 
			{\centering
				\includegraphics[width=0.23\textwidth]{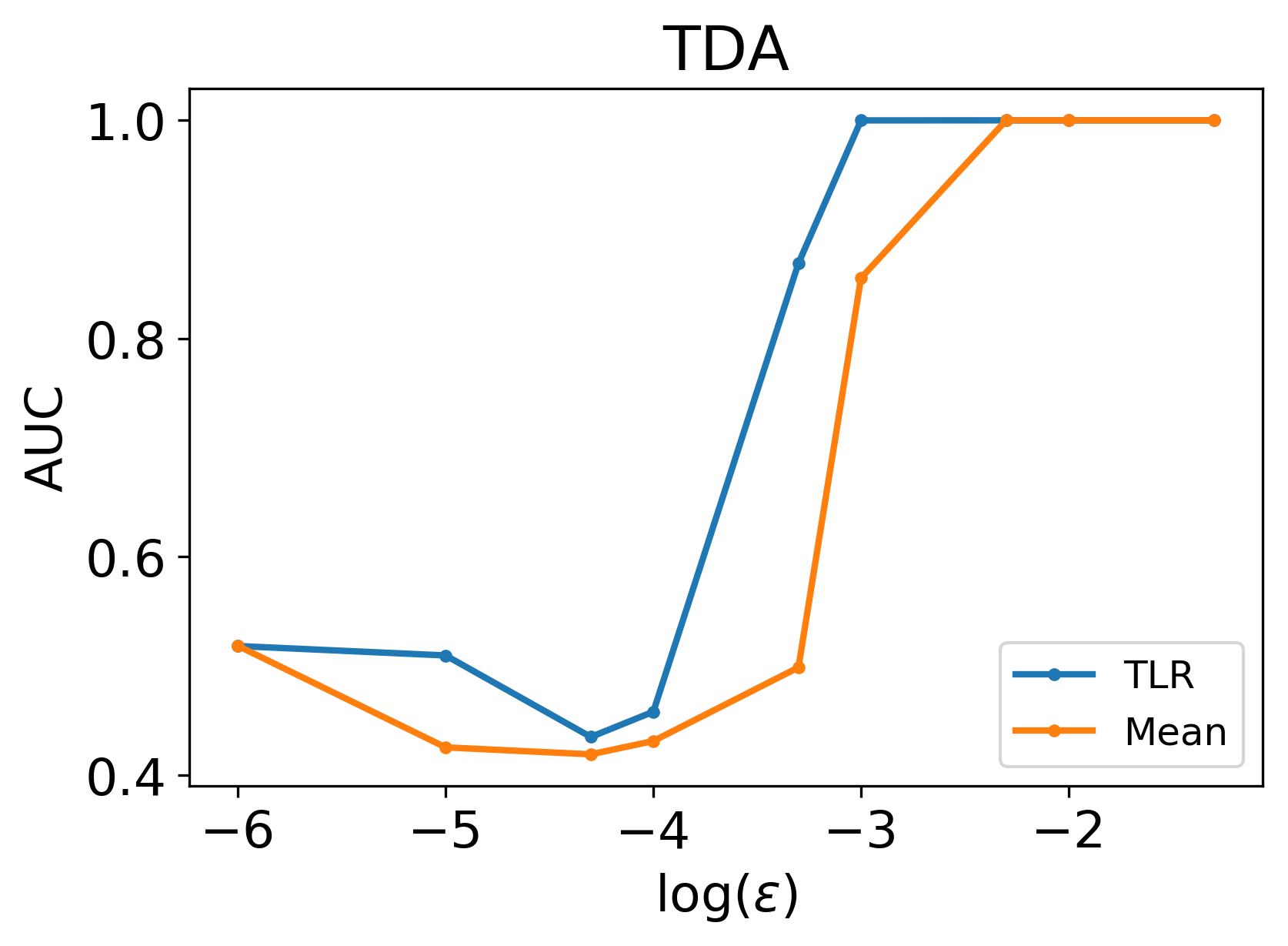}
				\includegraphics[width=0.23\textwidth]{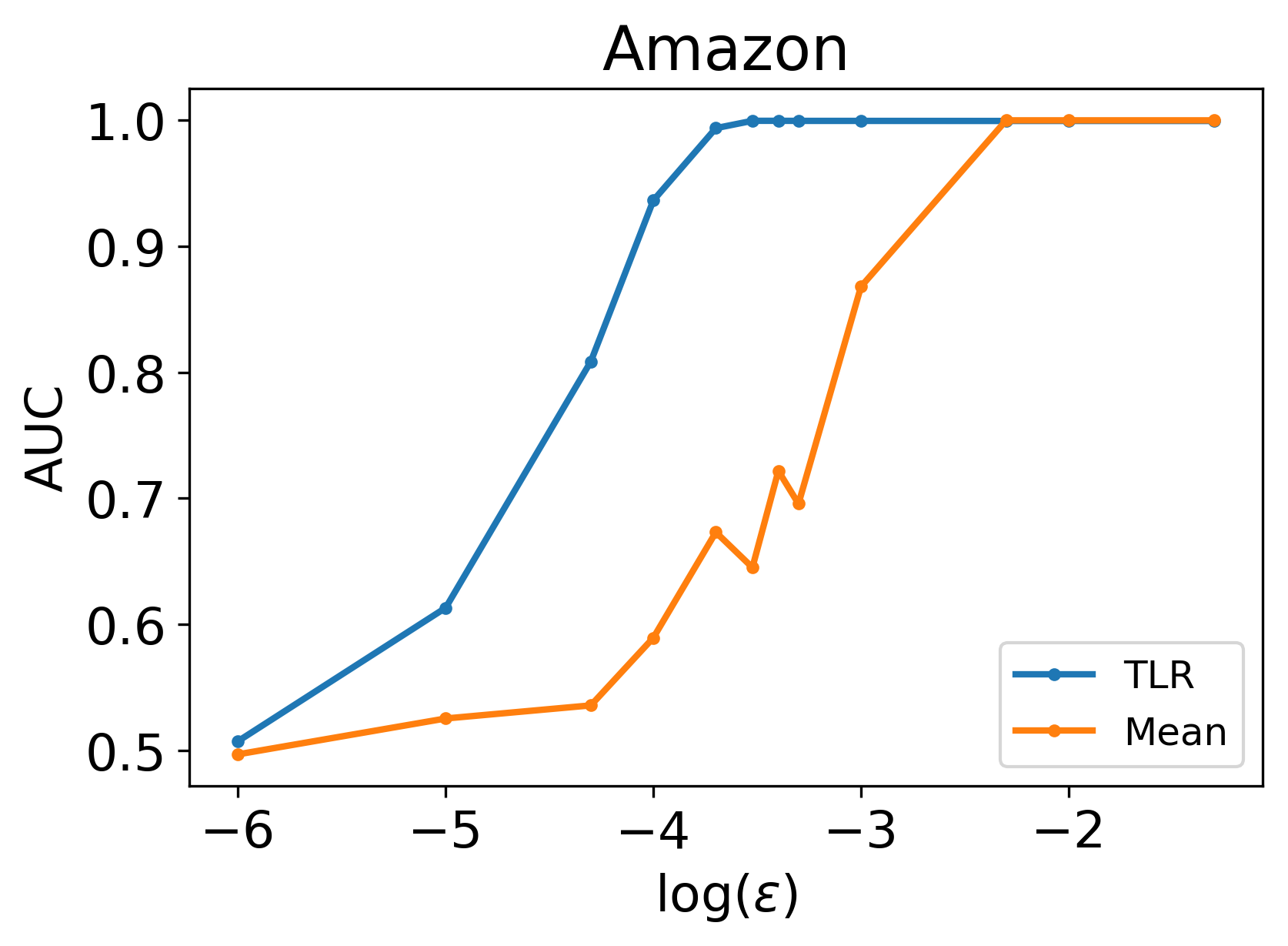}\\
				\includegraphics[width=0.23\textwidth]{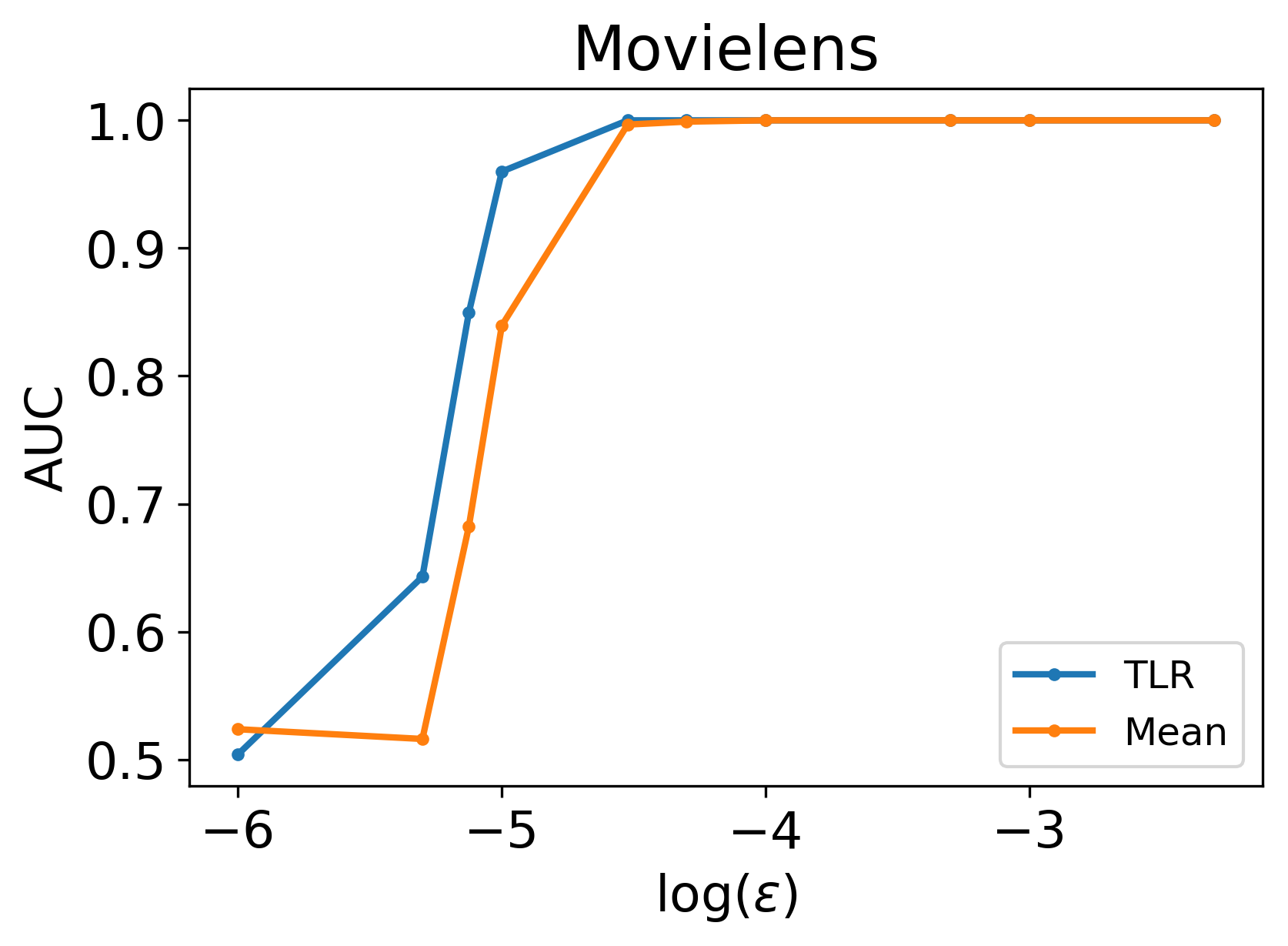}
				\includegraphics[width=0.23\textwidth]{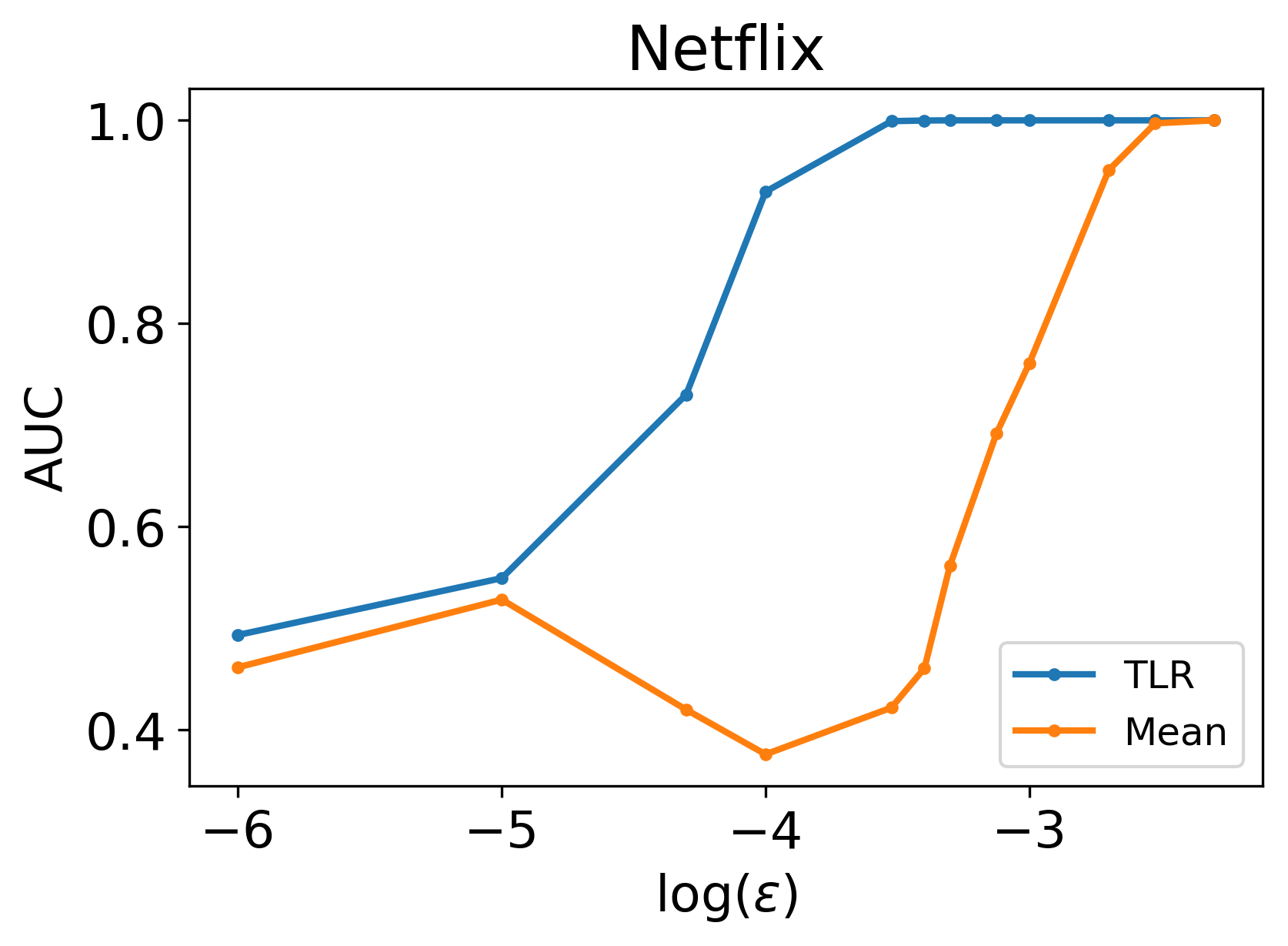}\\
				\includegraphics[width=0.23\textwidth]{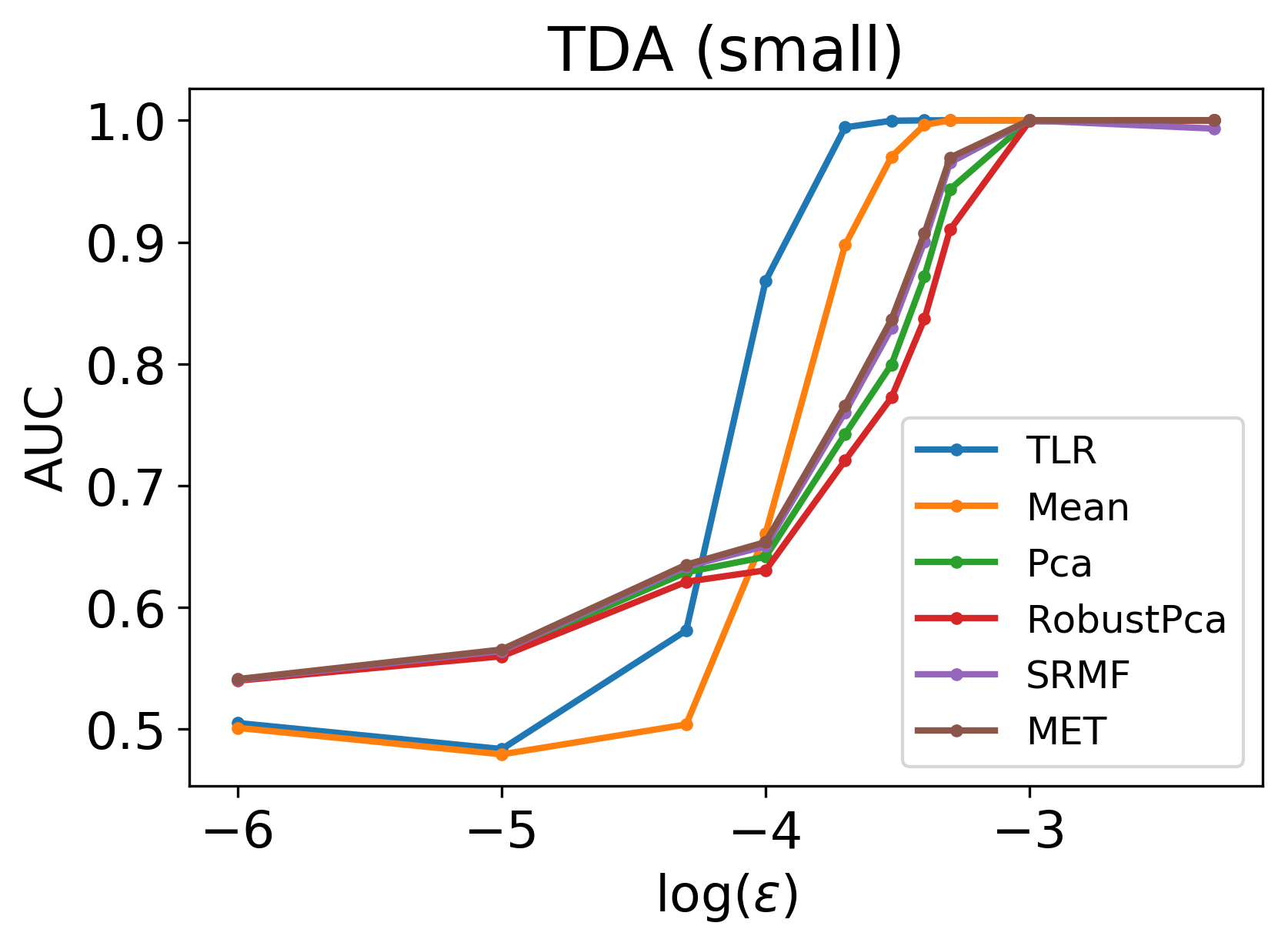}
				\includegraphics[width=0.23\textwidth]{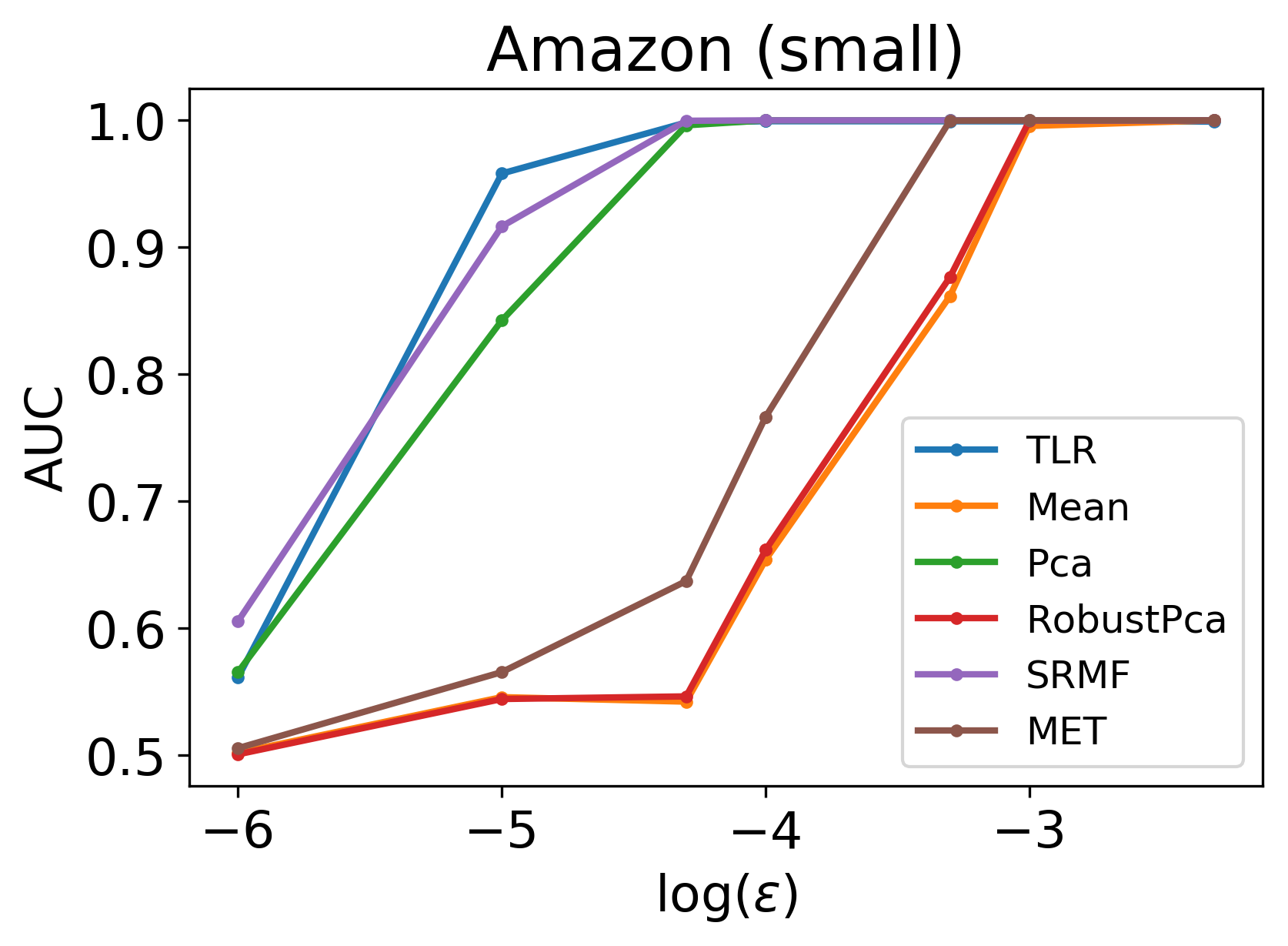}
			}
			\caption{The AUC as a function of the log of the noise in the random access experiment. 
				}
			\label{fig:noise}
		\end{figure}}
 
\newcommand{\figurehistswitches}{
	\begin{figure}[h]
 
			{\centering
				\includegraphics[width=0.23\textwidth]{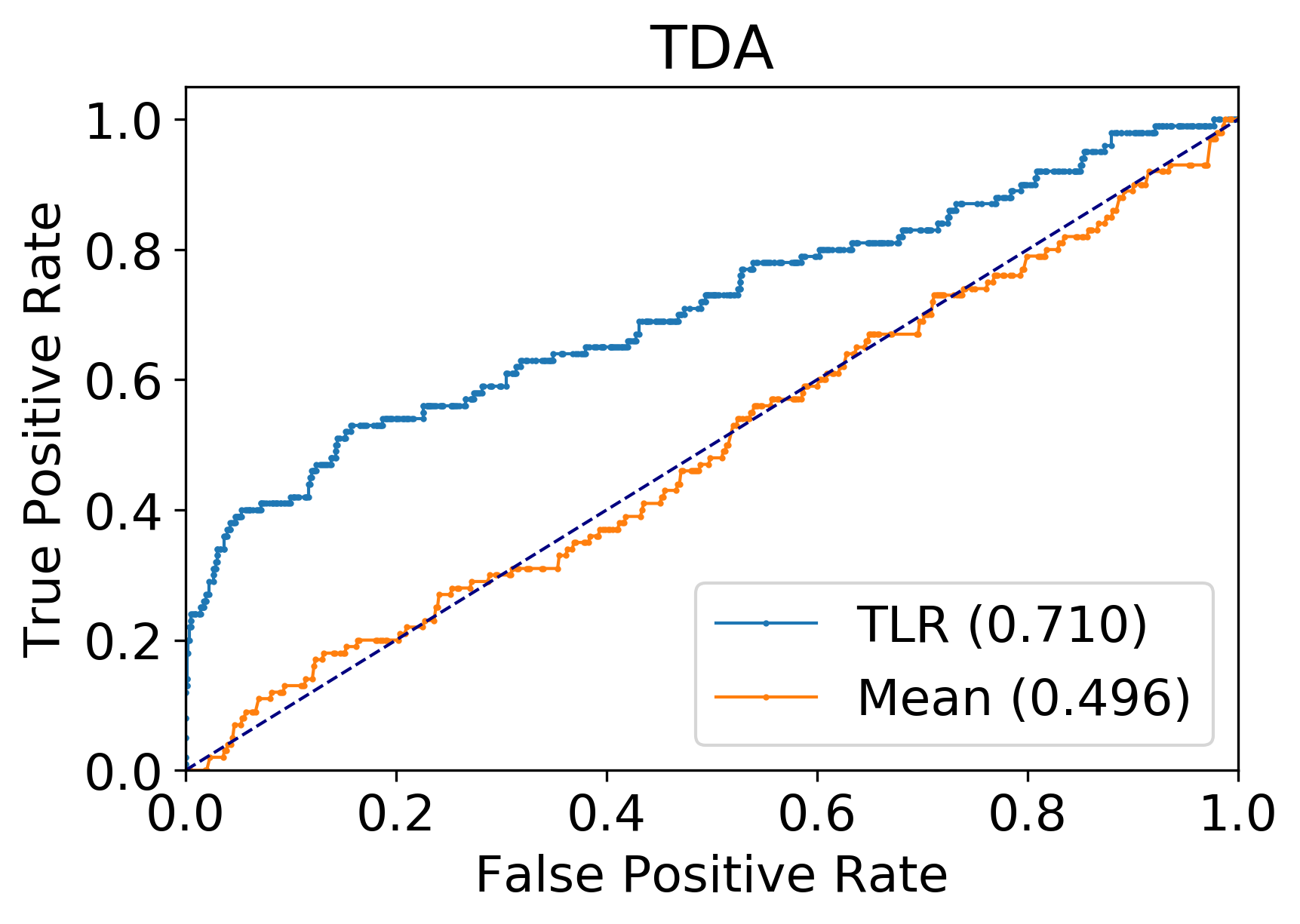}
				\includegraphics[width=0.23\textwidth]{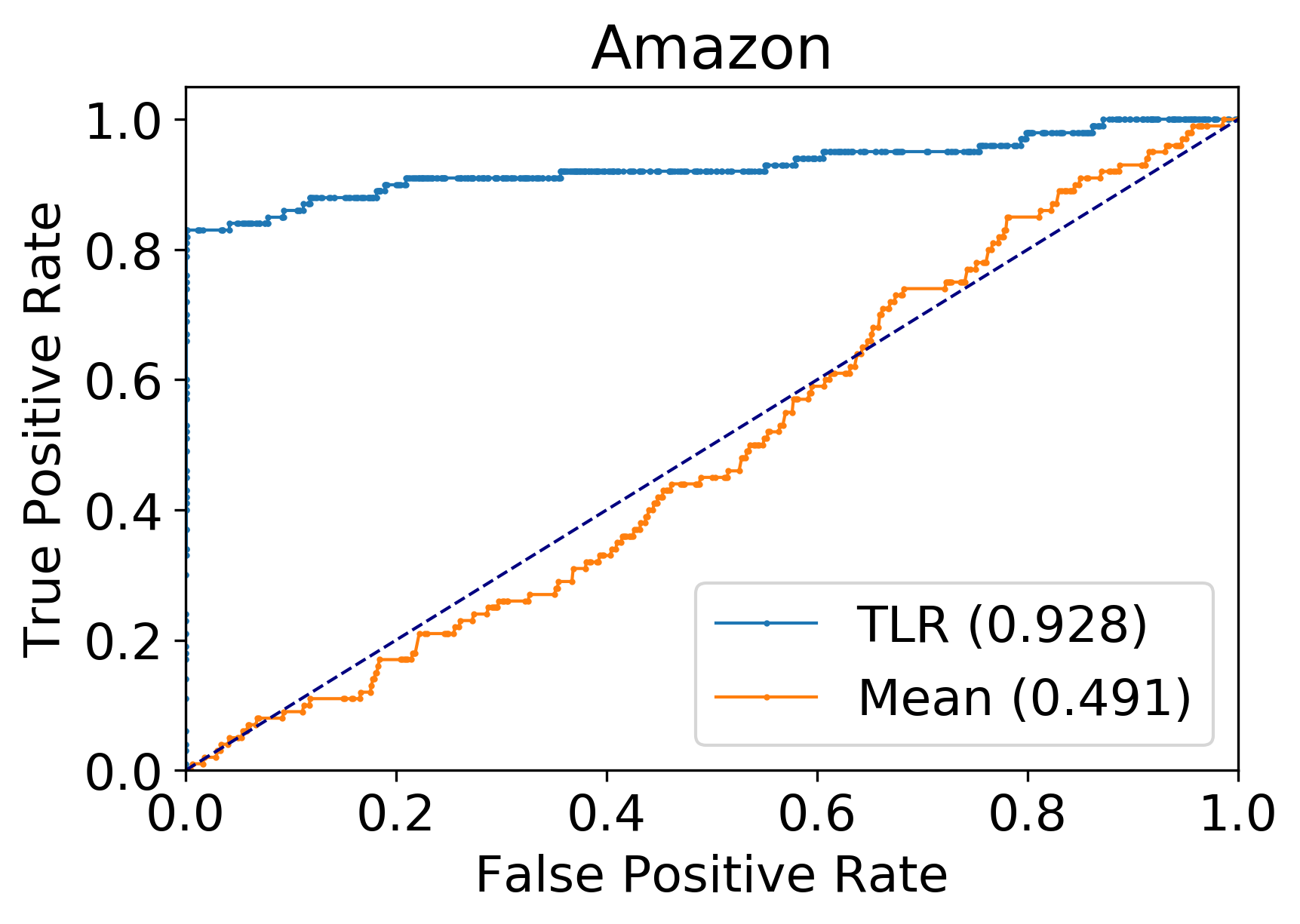}
				\includegraphics[width=0.23\textwidth]{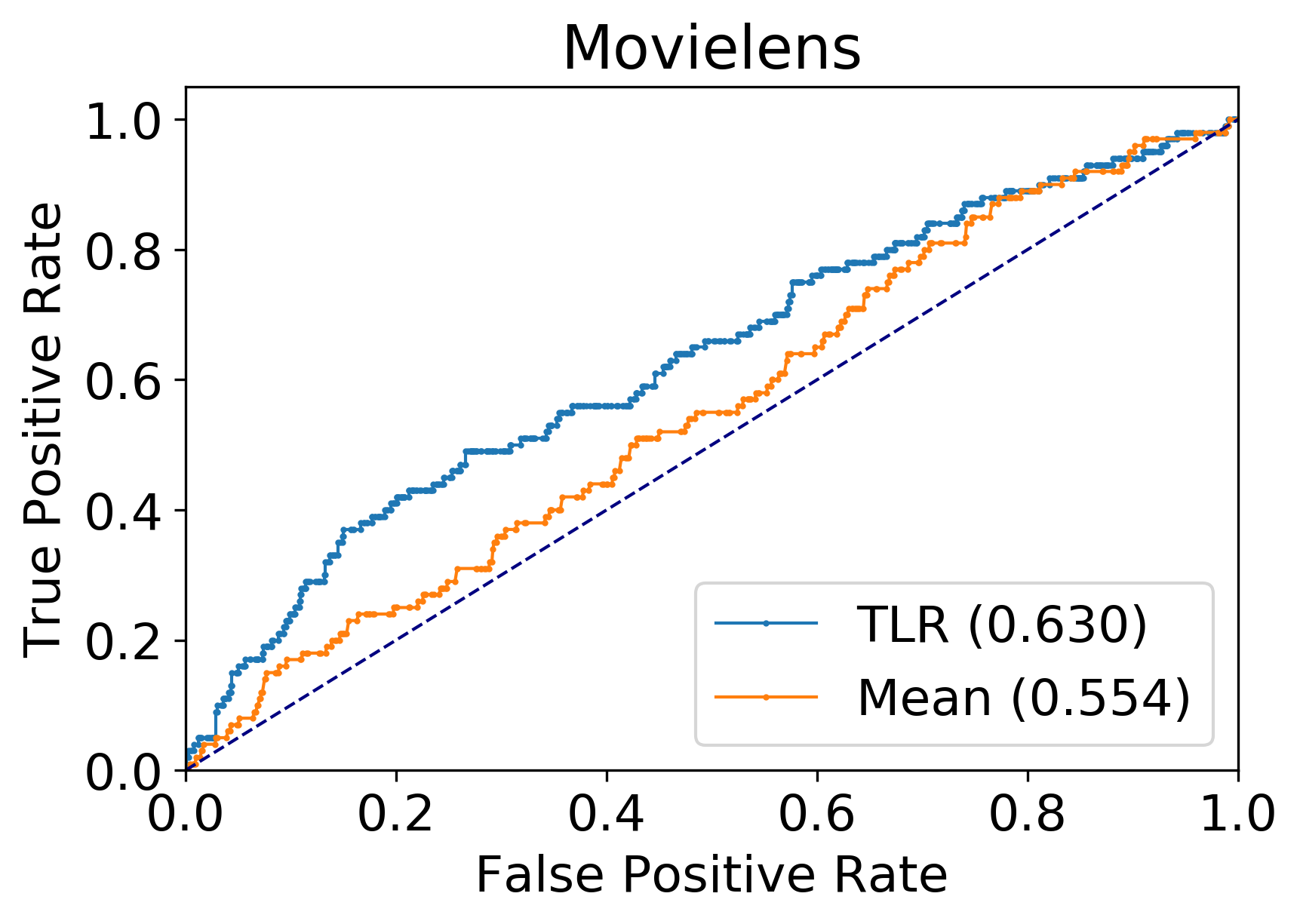}
				\includegraphics[width=0.23\textwidth]{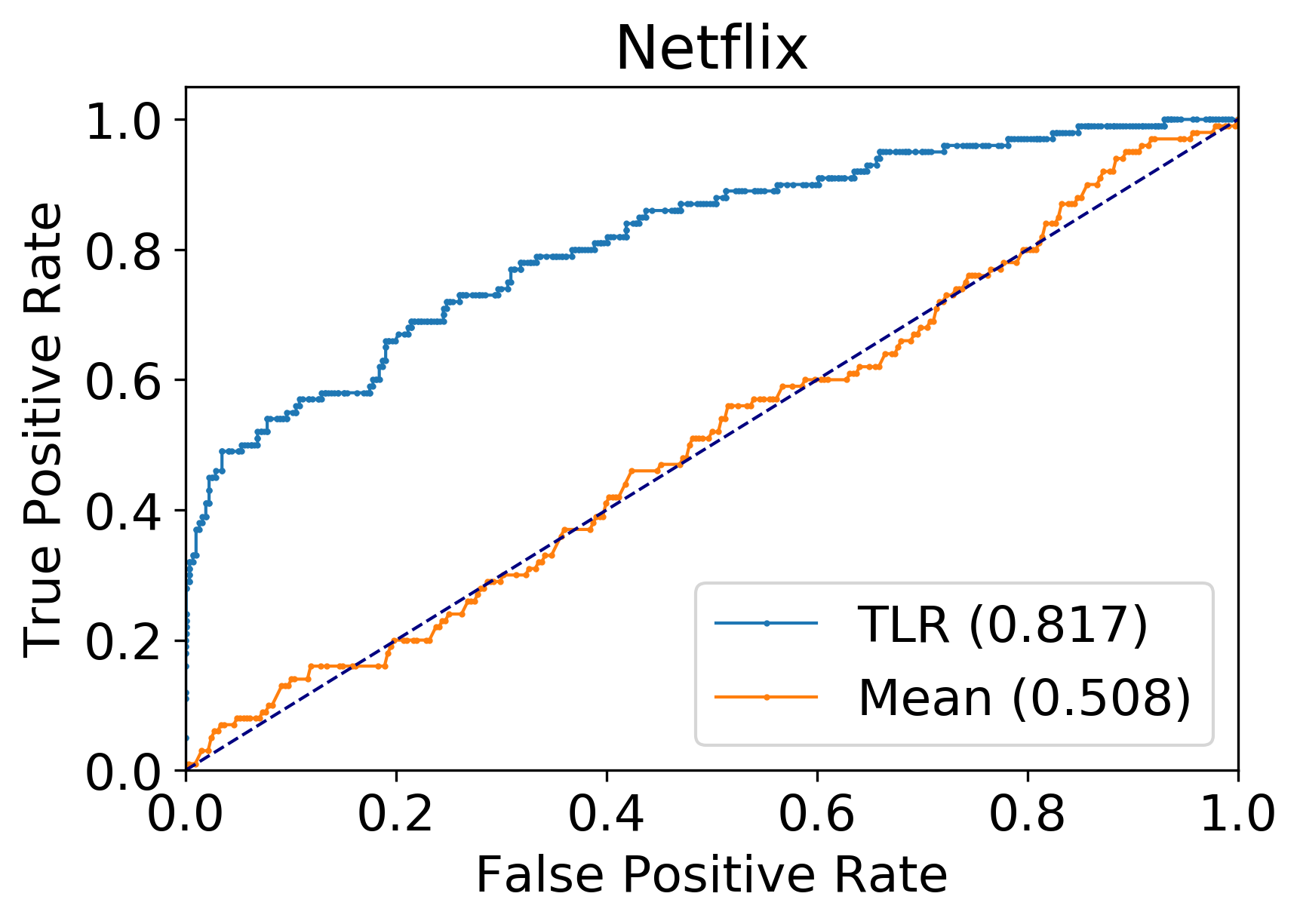}\\
				\includegraphics[width=0.23\textwidth]{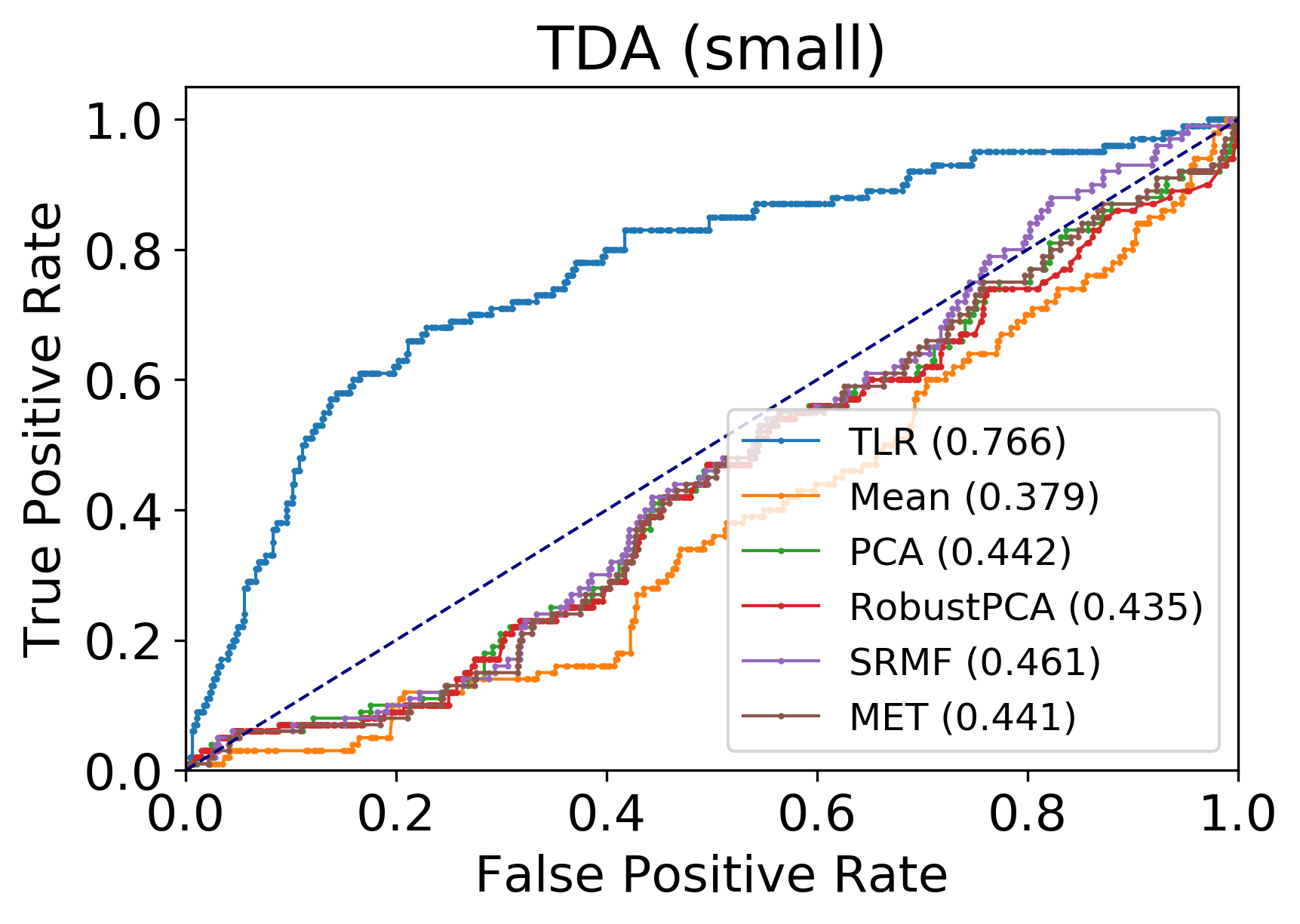}
				\includegraphics[width=0.23\textwidth]{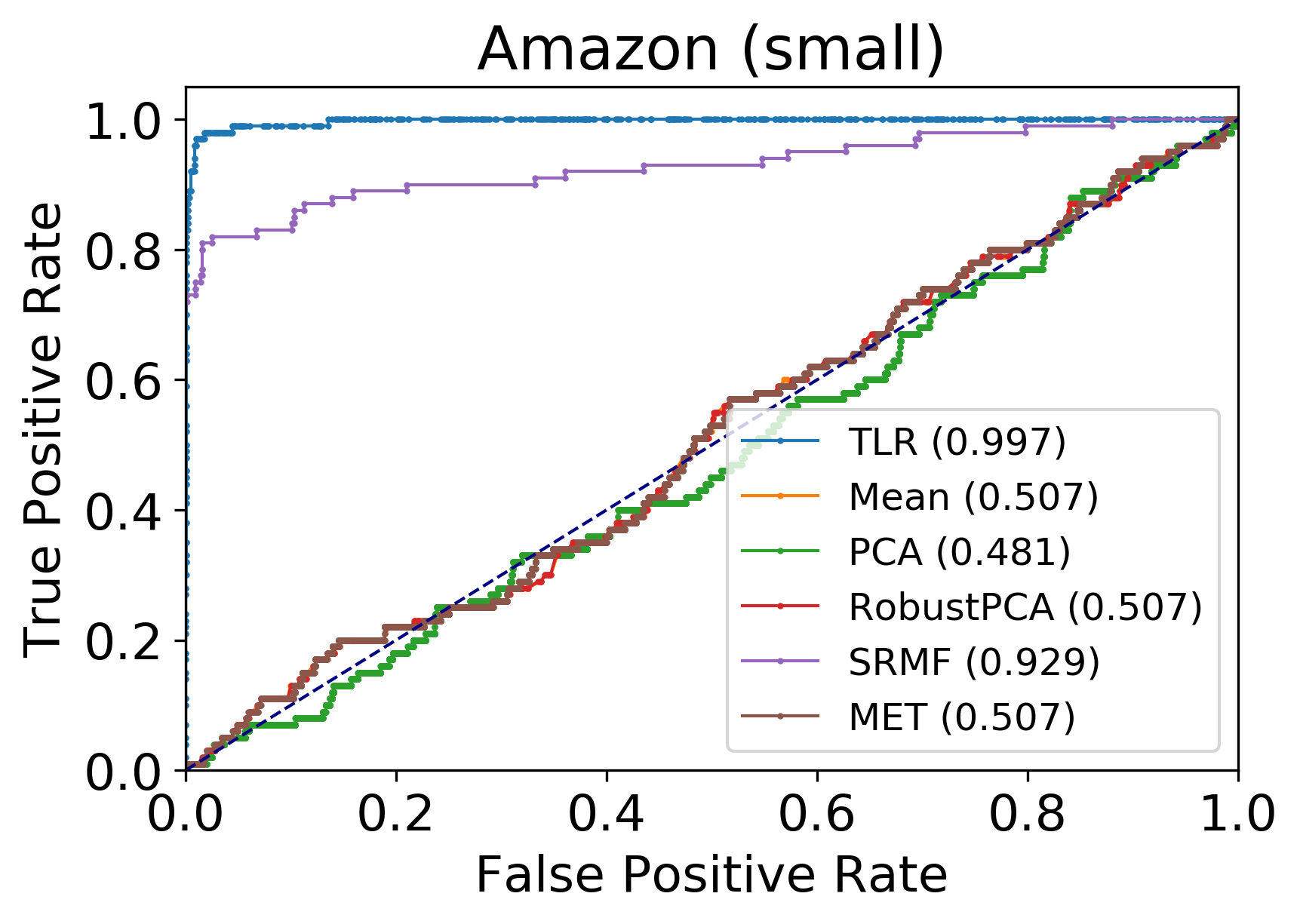}
			}
				\caption{ROC curves for the anomalous time experiment. The AUC is given in parentheses.}
		 	\label{fig:rocswitches}
		\end{figure}	
}
	
\newcommand{\tableruntime}{
\begin{table}[h]
		\resizebox{\columnwidth}{!}{
		\begin{tabular}{ccccccc}
			Data Set  & TLR (train) & TLR (test) &SRMF& PCA & RobustPCA & MET \\
			\toprule
				TDA  &  $\mathbf{946}$ & $\mathbf{0.0029}$ &- & - & - & - \\
					Amazon  &  $\mathbf{16621}$ & $\mathbf{0.0141}$ &- & - & -& -  \\
						MovieLens  &  $\mathbf{4652}$ & $\mathbf{0.0024}$ &- & - &-& -\\
							Netflix &$\mathbf{1964}$&$\mathbf{0.0071}$& - & - &  -& -  \\
			TDA (small) &  $\mathbf{45}$ & $\mathbf{0.0042}$ &229 & 140 & 338 &50  \\
			Amazon (small) & $\mathbf{209}$ &$\mathbf{0.0027}$ &279& 571 & 451 & 354  \\
			\bottomrule\\
		\end{tabular}}
		\caption{Run-time (seconds) on an 2.8GHz Xeon CPU with 40 cores and 256 GB RAM. }
		\label{tab:timeAnalysis}
 \end{table}
}

\section{Introduction}\label{sec:intro}
Consider a security analyst examining
user
access logs of a large database system. A blatant
security breach
might involve a user with
insufficient clearance attempting to access a restricted database table.
However, there could be more subtle indicators of
suspicious
activity, such as users accessing database tables that are atypical
of
their past behavioral pattern,
or at unusual times.
Moreover, in a distributed attack, perhaps no single user has done anything
particularly
out of the ordinary,
but the general
pattern of access
to different database tables is
atypical in terms of frequency, time of day, identity of the users involved, and so forth. 

The difficulty stemming from the nebulous definition of
potentially suspicious activity
is compounded by the fact that
severe
anomalies, by their very nature,
are extremely rare occurrences, and when the goal is to learn to identify them,
we generally do not expect the available training data
to contain any positive examples.
Furthermore, due to the large number of users and database tables in a typical system,
a naive
solution, which classifies all
previously unseen access events as anomalies,
will tend to trigger
many false alarms.
The latter issue is exacerbated further by the problem of
{\em cold start}
 \citep{A_Survey_of_Collaborative_Filtering_Techniques}
--- that is, the activity of previously unseen users (say, new employees).
This activity should not automatically be classified as an anomaly, or else too many false alarms will be issued.
Additionally, unseen accesses invoked by known users and tables could also cause a similar problem.
Thus, a key difficulty in anomaly detection of temporal events from a complex system
is to calibrate the surprise level associated with incoming events
--- and this is the central challenge that we address.

\paragraph{Problem description.} We address the problem of
unsupervised anomaly detection in a high-dimensional temporal sequence of user-object access
events. The events might be company employees accessing database tables, users
interacting with a website, customers
executing
transactions, and so on. We assume that the users and the objects are atomic (that is, known by a name or identification number only).
Time is
discretized into fixed units (such as hours), and for each time unit, the
access activity is recorded in a
binary
user-object incidence matrix.

Our goal is to develop an anomaly detection method which allows identifying distributed attacks. Such attacks cannot be characterized by a
single suspicious access event --- the latter can be handled by more direct means, such as a permissions
system; instead, they are characterized by system-wide suspicious access patterns. Thus, our goal is to detect anomalous time intervals, i.e., segments of activity that contain atypical behavior. Our proposed method can also be adapted
to
diagnosing
a specific user's behavior,
or a specific object's access pattern,
as anomalous within a time interval. We leave
the details of this extension
to future work.

Our setting is similar to the problem of anomaly detection in sequences of \emph{traffic matrices} in communication engineering \citep{Roughan:2012:SCS:2369156.2369159},
where typically the traffic matrix of each time unit contains the number of packets (or bytes)
transferred from source to destination IP.
In our problem, we consider more general user-object incidence matrices
(but, being binary, these only register the presence or absence of an access). In addition, we assume high-dimensional data, and require a fast and efficient computation at test time. 

From a bird's eye view, our approach consists of two orthogonal components:
(1) fitting a generative model of the data based on a training set and
(2) assigning an anomaly score to new time segments based on the model that we learned.
For (1), given the high number of users and objects, as well as their atomicity,
we propose a model based on a low-rank assumption.
This allows decomposing users and objects into latent factors
\citep{Generalization_Error_Bounds_Collaborative_Prediction}, and discovering abnormal behavior patterns based on the latent factors of a user or object.
We extend to our setting previous theoretical results on the sample complexity of learning such a model. 
Several previous works on anomaly detection for traffic matrices, which we discuss in \secref{rel}, use low-rank models; however, in these works the low-rank model is over the space of (user-object pairs) $\times$ (time intervals), while our low-rank model is over the significantly lower-dimensional space of users $\times$ objects. As a result, our approach can handle higher-dimensional data sets, while requiring less working memory. In addition, our approach performs fast real-time anomaly detection after the model-training phase, while previous approaches do not distinguish between the phases---resulting in a heavy computational burden during the anomaly detection phase.

Our main methodological innovation is in step (2).
Here, the likelihood assigned to an observed behavior pattern,
based on the fitted model,
is used to calculate an anomaly score.
We learn the expected likelihood score from the training data,  as an independent
regression problem,
and then use the disparity between the
{\em predicted}
and
{\em observed}
likelihood scores 
as a measure of surprise.
This compensates for
characteristic
discrepancies
between the learned model
and the
observed
behavior at certain times.
For instance, a different activity pattern is expected during
different hours of the day.
However, learning separate models for each possible activity type is prohibitive,
for both statistical and computational reasons.
Our approach enables us to
learn a single model,
yet adapt the anomaly score to
time-varying
normal behavior patterns. 

This work was motivated by studying 
a real database monitoring system which performs anomaly detection.
We provide a new data set, which we call TDA (Temporal Database Accesses).
It
was generated by recording
user
accesses in a live database system
(of the kind often
monitored for anomalous behavior)
over a two-month period. The data set records accesses of thousands of users into thousands of database tables, and the rate of events was approximately $20,\!000$
per hour.
The data is provided in the form of binary access matrices,
indicating whether each user accessed each table during each 
one-hour-long interval. 
Our code is publicly available at \dataandcode. The TDA data set is available at \kaggleurl.

\section{Related Work}\label{sec:rel}

Intrusion detection methods can be roughly clustered into two basic categories:
rules-based and learning-based methods
\citep{journals/sigmod/SantosBV14}.
Some approaches \citep{oai:CiteSeerPSU:231451,journals/vldb/KamraTB08,journals/jcp/SrivastavaSM06,conf/dbsec/SpalkaL05,conf/raid/MathewPNU10,LeeLowWon02} require full sequential event information. In contrast, in this work we focus on the case where accesses in each interval are aggregated, and sequence data is not available.

The problem of change-point detection \citep{takeuchi2006unifying,tartakovsky2006novel,hohle2010online,khaleghi2014asymptotically}, 
while not strictly a subset of anomaly or intrusion detection,
is of some relevance
to the temporal setting. Another natural approach is to model the temporal process via a Markov Model or a Hidden Markov Model,
as was done in
\citet{DBLP:conf/icml/GornitzBK15,Soule:2005:TMT:1111572.1111580}. These approaches, however, are infeasible in our setting: due to the large number of users and objects that we are dealing with (typically in the thousands), billions of parameters would need to be estimated, which is impossible to do from a reasonable amount of data. 

Variants of association
rules are used in \citet{chan2003machine,das2007detecting,das2008anomaly};
these are 
less suitable for handling new instances and large spaces. 
Some supervised and semi-supervised approaches have also been suggested \citep{Gunnemann:2014:DAD:2623330.2623721,5694074,4215661}. These are applicable when there is supervision data on anomalies.

The problem of anomaly detection in sequences of traffic matrices seems the most similar to our setting. The traffic matrix of each time unit contains the amount of data transferred from
 source IP address to destination IP address. 
\citet{Roughan:2012:SCS:2369156.2369159} propose the SRMF (Sparsity Regularized Matrix Factorization) algorithm.
A
matrix is constructed by vectorizing the traffic matrices
into
the
columns of a new matrix,
whose width is the number of time intervals in the data set,
and whose height is square the number of IPs in the data set.
Each row in this matrix corresponds to a single IP-IP pair, and each column corresponds to a single time interval.
Temporal smoothness
is assumed.
A smooth
and low-rank approximation is obtained for this matrix via regularized optimization,
and used as a
baseline to detect
anomalous time intervals.
In our setting, SRMF can be used by replacing the vectorized (IP address)$\times$(IP address) matrices with
vectorized user $\times$ object access matrices.
\citet{Lakhina:2004:DNT:1015467.1015492} construct a similar matrix, but obtain a low-rank model using PCA. 
\citet{Candes:2011:RPC:1970392.1970395} propose a Robust PCA algorithm for finding anomalies in a sequence of images, again using a similar matrix with vectorized matrices (images) as columns. These methods all look for low-rank structures  in the space of (user-object pairs) $\times$ (time intervals).
\citet{TTB_MET} propose MET, a tensor decomposition technique, which searches for a low rank structure in the space of users $\times$ objects $\times$ time-intervals. We compare our algorithm to these approaches in our experiments in \secref{experiments}. \citet{zhou2017anomaly} and \citet{azzouni2017long} propose deep-learning approaches for anomaly detection on matrices. These methods require a fully-connected input layer, which in our case would include millions of features, and are not applicable to the high-dimensional data sets that we study here. 

\section{Our approach}\label{sec:approach}	
To detect anomalous access patterns, we define a probabilistic model for normal access patterns.
We learn a baseline low-rank stationary model for a user-object incidence matrix, and then model the deviation of the
temporal model from the stationary one. This 
enables
learning and detection using a feasible number of parameters.

Denote the number of different users by $n$ and the number of different objects by $m$. For simplicity of notation we fix $m$ and $n$; however, in practice they need not be known to the algorithm in advance. 
We assume that the data is provided as a sequence of consecutive time intervals, where for each time interval $t$  an access matrix $B_t \in \{0,1\}^{n \times m}$ is provided, where $B_t(i,j) = \one[\text{user $i$ accessed object $j$}]$. The length of a time interval is
an external application-specific parameter.
The goal of the algorithm is to
assign an anomaly score to each new access matrix $B_t$ which is observed after the training phase. 
The distribution of $B_t$
could be
modeled using a matrix $\pt\in [0,1]^{n \times m}$, where $\pt(i,j)$ is the probability 
that user $i$ accesses object $j$ during time interval $t$, and different entries in $B_t$ are assumed statistically independent. 
Thus, at time interval $t$, any possible observation matrix $G \in  \{0,1\}^{n \times m}$
would be
assigned a probability of
\begin{align}
  \label{gen-model-naive}
  \P_{\pt}[B_t = G]
  :=
\!\!\!\!  
  \prod_{i \in [n],j\in [m]}
\!\!\!\!
  \pt(i,j)^{G(i,j)}(1-\pt(i,j))^{1-G(i,j)}.
\end{align}
This model
allocates
a separate set of parameters for each time interval,
and
is incapable of
extrapolating beyond
past observations.
Hence, we instead posit
a single baseline matrix $\pnull \in [0,1]^{n \times m}$, which approximates a stationary (time-independent) distribution. This baseline matrix can be thought of as a
rough
approximation of $\pi_t$ for all time intervals $t$.
It induces a distribution on observation matrices
in a manner analogous to \eqref{gen-model-naive}:
$\P[B_t = G] :=
\P_{\pnull}[B_t = G]$.
This model is similar to the one proposed in \citet{davenport20141} for a non-temporal
variant
of matrix completion from probabilistic binary
observations.
We take the standard approach of assuming that $\pnull$ is low-rank, motivated by the intuition that the relevance
of a user to an object
can be explained by a small number of latent factors
(see, e.g., \citealp{A_Survey_of_Collaborative_Filtering_Techniques,One-Class_Matrix_Completion_with_Low-Density_Factorizations,Mining_of_massive_datasets}).

Let $\hp$ be an estimator for $\bar{\pi}$, which is used to approximate $\bar{\pi}$. In \secref{lowrank} we give our procedure for obtaining $\hp$.
Having obtained an approximation $\hp$ to 
$\pnull$ based on the training set, we can calculate the log-likelihood of an
observation matrix $G$ at time-interval $t$,
as induced by the parameters
$\hat{\pi}$:
\begin{align*}
&\logl(G,\hp) :=\log \P_{\hp}[G]
=
\\&
\sum_{i,j}\left(
  G(i,j)\log\hp(i,j)+
  (1-G(i,j))
  \log(1-\hp(i,j))
  \right).
\end{align*}

At this point, one might consider assigning time-interval $t$ an anomaly score based on the value $\logl(B_t,\hp)$, where $B_t$ is the actual matrix observed at time $t$:
a lower log-likelihood value
would indicate a higher anomaly level.
The problem with this proposal is that
it is likely that some time intervals
will
systematically exhibit behavior that deviates significantly from that of $\pnull$,
and these systematic deviations should not be classified as anomalies. In fact, it would be completely normal for these deviations to occur, and less normal if they do not occur.
For instance, it is expected that access patterns should be different between night and day,
weekdays and weekends, holidays and workdays, and so on, as well as be affected by application-specific circumstances.
For instance,
if the application monitors a software company's database accesses, scheduled days of major
version
updates
would likely
have patterns different from other days.
Thus, we need some way of accounting for systematic, non-anomalous, differences between time intervals.

We address this issue by proposing a compromise between the
overly constraining stationary model defined by $\pnull$ and the overly rich model in
\eqref{gen-model-naive}.
We model
the similarity between $\pt$ and $\pnull$
in terms of the
properties of the time interval $t$. This similarity can
be formalized using the cross-entropy between $\pnull$ and $\pt$. 
Recall that the cross-entropy between two discrete
distributions
$p,q$ is $H(p,q) := -\sum_{i} p_i \log(q_i). $
For two distributions defined as above by matrices $\pi_1,\pi_2 \in [0,1]^{m \times n}$, we have 
\[
H(\pi_1,\pi_2) = \sum_{i,j}H\ber(\pi_1(i,j),\pi_2(i,j)),
\]
where $H\ber(a,b)$ for $a,b \in [0,1]$ is the cross-entropy between the distributions Bernoulli$(a)$ and Bernoulli$(b)$.
The expected value of the measured log-likelihood score $\logl(B_t, \hp)$ satisfies
\[
E_{B_t \sim \pt}[\logl(B_{t},\hp)] = -H(\pt,\hp).
\]
Therefore, if the actual log-likelihood score $\logl(B_{t},\hp)$ is far from $-H(\pt,\hp)$, this can be considered an anomalous behavior. 
\newcommand{\hkl}{\widehat{KL}}
The value of $H(\pt,\hp)$ cannot be computed, since $\pt$ is unknown. 
Instead, we train a predictor that estimates it. We represent each time-interval
$t$ by a vector $v_t \in \reals^d$ of $d$ natural time-dependent real-valued features, such as time-of-day, day-of-weak, auto-regressive features (such as the log-likelihood in a previous interval) and possibly application-specific features, such as a binary indicator for times of version updates in a software company.
We fit a linear regression model $v_t \mapsto \dotprod{\hat{w}, v_t}$ parametrized by $\hat{w} \in \reals^d$, where $\logl(B_{t},\hp) \approx \dotprod{\hat{w}, v_t}$ on the training set.\footnote{Central to this approach is the assumption that anomalous intervals are very rare, and so
  the model is trained
  almost exclusively
  on
  non-anomalous behavior.}
We then define the anomaly score of a new observed time interval $t$ by
\begin{equation}\label{dev}
\deviation(\hp, B_t, t) = |\logl(B_{t},\hp)- \dotprod{\hat{w}, v_t}|.
\end{equation}
This approach
enables
identifying anomalous behavior,
while avoiding many of the false alarms resulting from normal differences between time intervals. Our definition of deviation identifies cases of a high likelihood also as anomalous, since they might indicate that the interval is less noisy than expected, which might also indicate a possible issue.

Lastly, we address the issue of \emph{cold-start} \citep{A_Survey_of_Collaborative_Filtering_Techniques}, in which new users and objects can appear for the first time in the test set, without ever having appeared in the training set. 
For instance, in the database-access setting,
new employees and new database tables can be added over time. If the application monitors an open environment such as a public web site, then the users and objects modeled in $\pnull$ could be a small minority of the set of users and objects observed during the deployment of the system. 
We address this issue by applying a process commonly known as  {\em folding} (e.g., \citealp{deerwester1990indexing,MR1722790}) to incorporate the new users or objects into the model on the fly.
In the next section we give a detailed account of the full anomaly detection algorithm. 
	
\section{The Algorithm}\label{sec:algorithm}
We describe the two phases of the algorithm: training and testing. In the training phase the model is learned. In the testing phase new intervals come in and are assigned an anomaly score based on the learned model.

In the
training
phase, the algorithm receives a training set $S = (B_1,\ldots,B_T)$, of consecutive access matrices. 
We split $S$ into two parts, $S_1 = (B_1,\ldots,B_{T_1}),S_2 = (B_{T_1+1},\ldots,B_T)$. $S_1$ is used to find an estimator $\hp$ for the probabilistic stationary model $\pnull$, while $S_2$ is used to fit the log-likelihood regressor $\hat{w}$. The full training algorithm is described in \secref{full}. 

In the anomaly-detection (testing) phase, an access matrix is provided as input for each time interval, and the algorithm outputs an anomaly score for each such matrix using Eq.~\eqref{dev}.
The monitoring system
now has a ranking of all the intervals
by anomaly score, and it can display the full ranking or the top few,
as specified
by the desired user interface.
For instance, if the security analyst can study 10 events a day then the top 10 suspected anomalies will be presented. 
Thus the threshold of anomalies to display depends on the capacity of the security analyst and the definitions of the monitoring system.

For simplicity of presentation, 
We first describe the two phases of the algorithm assuming that no new users or objects appear after the model $\hp$ is estimated in the training phase. We then explain, in  \secref{coldstart}, how the algorithm is seamlessly adapted to handle new users or objects.

\paragraph{Computational complexity} The most computationally expensive step in the algorithm, which we detail below, is an SVD procedure.
A naive implementation of SVD is can be
cubic in the matrix dimensions. However, 
since in our application the matrices are usually sparse,
a sparse SVD algorithm can be used to speed up computation (e.g., \citealp{larsen2000computing}).
All other procedures that our algorithm employs are linear in $m$ and/or $n$. 

Below we use several matrix norms: for a matrix $A \in \reals^{m \times n}$, denote the nuclear (trace) norm by
$\norm{A}\tno = \sum_{i}\sigma_{i}$, where $\sigma_i$ are the singular values of $A$. The Frobenius norm is $\norm{A}_F = (\sum_{i=1}^m \sum_{j=1}^n A_{i,j}^{2})^{\half}$, and the spectral norm is $\norm{A}\spn = \max \sigma_i$.

\subsection{Estimating the matrix model}\label{sec:lowrank}
Our probabilistic estimation problem in the first part of the training process is to estimate a low-rank probability matrix $\hp$ based on the sequence of matrices $S_1 = (B_1,\ldots,B_{T_1})$. 
In our simplified probabilistic model,
the
$B_t$'s are assumed to be drawn i.i.d.~according to some low-rank matrix $\pnull$. 

A standard approach for finding a low-rank estimate \citep{fazel2002matrix}
is to minimize the mean-squared error of the matrix difference, and regularize using the trace norm,
which is a convex relaxation of the low-rank constraint. Previous works assume that only a single access matrix drawn from $\pnull$ is available \citep{davenport20141,PU_Learning_for_Matrix_Completion}, while in our setting several matrices are provided at training time. To estimate $\pnull$, we define a single average matrix $\btrain=\frac{1}{T_1}\sum_{t=1}^{T_1}B_t$, and solve the following optimization problem. 
\begin{equation}
\label{QP}
F(\bar{B},\lambda) :=
\min_{\phat\ \in
[0,1]
^{n \times m}}\norm{\phat-\btrain}_F^{2} + \lambda\norm{\phat}\tno.
\end{equation}
Here $\lambda > 0$ balances the trade-off between fidelity to $B$ and the low-rank structure.

We prove 
the following generalization bound for $F(\btrain,\lambda)$:
For a matrix $A$ and a distribution $D$ over matrices, let $\ell(A,D)$ be an $L$-Lipschitz measure of the quality of $A$ as a model for $D$. Then, $\forall\hat{\pi} \in [0,1]^{n \times m}$ such that $\norm{\hat{\pi}}\tno\leq \gamma$, with high probability,
\[
|\ell(\hat{\pi},\cD_{\pnull}) - \ell(\hat{\pi},S)| =O\left(\frac{L\gamma}{\sqrt{ T}}\right),
\]
 where $S$ is an i.i.d.~sample of size $T$ drawn from $D_{\pnull}.$
The full proof of this new convergence result can be found in the appendix \ref{ap:analysis}.
We further show that this result holds, in particular, for the Mean Squared Error loss, thus implying that the minimizer of \eqref{QP} over the sample of matrices converges to the best possible stationary model for the given distribution. 

Minimizing \eqref{QP} without requiring $\phat \in [0,1]^{n\times m}$ can be done efficiently, where the result is a model in $\reals^{n\times m}$. This is shown in  \citet{mazumderSpectral}: For a
real-valued matrix $A$, let $\svd(A)$ be the Singular Value Decomposition of
$A$. Let $r$ be the rank of the matrix $\btrain$, and let
$(U,D,V\trn)=\svd(\btrain)$. Then the minimizer of $F(\btrain,\lambda)$ is $UD_{\lambda/2}V\trn$, 
where $D_{\lambda}=[\max(d_{1}-\lambda,0),...,\max(d_{r}-\lambda,0)]$.
Note that for $\lambda\geq 2||\bar{B}||_{sp}$, the minimizer is zero, which provides an upper bound on the valid range for $\lambda$.

We employ this unconstrained minimization approach to get a model   in $\reals^{n\times m}$, and then convert it into a solution $\phat$ that satisfies $\phat\ \in [0,1]^{n \times m}$ using the clipping strategy proposed in \citet{Matrix_Completion_Trace_Norm}.
This approach is much lighter computationally than solving the full constrained minimization, yet it reportedly results in very similar solutions.
This matches our empirical observations in our experiments as well.

The procedure described above for finding a model matrix $\hp$ 
is given in \algoref{hpfind} as the procedure \texttt{FindModel}.

\begin{algorithm}[H]
\caption{\texttt{FindModel}$(\lambda,S)$: Find model matrix}  
\label{alg:hpfind} 
\begin{algorithmic} [1] 
\Require $\lambda > 0$, training data $S = (B_1,\ldots,B_K)$
\Ensure $\hp$
\State  $\btrain \leftarrow \frac{1}{K}\sum_{i=1}^K B_t$.
\State $(U,D,V\trn) \leftarrow \svd(\btrain)$.
\State $\pi' \leftarrow UD_{\lambda/2}V\trn$.
\For{$i \in [m], j \in [n]$}
\State $\hp(i,j) \leftarrow \min(1,\max(\hp'(i,j),0))$. 
\EndFor
\State Return $\hp$.
\end{algorithmic}
\end{algorithm}

Note that while in the rest of our algorithm we use the log-likelihood as a measure of fit between the model $\hp$ and the observed matrix $B_t$,
in \eqref{QP} the Frobenius norm is used instead,
and our generalization bound
holds for the Mean Squared Error. This is because the Frobenius norm is more stable for low-rank approximation, and because optimizing over the log-likelihood under the constraints is significantly more computationally demanding, making it impractical in our setting. Our experiments show that this approach works well in practice.

\subsection{The full training algorithm}\label{sec:full}
In the first step of the training algorithm, which uses the first part of the training set, $S_1$, the value of the regularization parameter $\lambda$ is selected by cross-validation, and  the selected $\lambda$ is used to find the estimated model $\hp$. 
\begin{enumerate}
\item A set of values $\Lambda$ is initialized for cross-validation.
  We use the set $\{ \norm{\btrain}\spn/2^i \}_{i=0}^{K}$, where $K$ is selected adaptively, by identifying when decreasing $\lambda$ further does not improve the log-likelihood on the validation set.
\item $k$-fold cross-validation ($k=10$) is performed to select $\lambda \in \Lambda$: In fold $i$, $S_1$ is divided to a training part $S_1^t(i)$ and a validation part $S_1^v(i)$, and a model $\hp_\lambda(i)$ is calculated by $\hp_\lambda(i) \leftarrow \texttt{FindModel}(\lambda, S_1^t(i))$.
The score of $\lambda$ is set to the average 
\[
L(\lambda) = \frac{1}{k} \sum_{i=1}^k \frac{1}{|S_1^v(i)|}\sum_{B_t \in S_1^v(i)} \logl(B_t, \hp_\lambda(i)).
\]
\item The regularization parameter is set to $\lambda^* \leftarrow \argmax L(\lambda)$. 
\item The estimated model is set to $\hp \leftarrow \texttt{FindModel}(\lambda^*, S_1)$.
  
\end{enumerate}

The second step of the training step uses $S_2$ as follows: Having found a model estimate $\hp$, we now find a regressor $\hat{w}$
for the expected log-likelihood for time-interval $t$.
\begin{enumerate}
	
	\item A training set $\{(v_t,y_t)\}_{t=T_1+1}^T$ for regression is calculated  from $S_2$ as follows:
	\begin{enumerate}
		\item The vector of time-dependent features $v_t$ is calculated using the definition of the features for $t$ (e.g., time-of-day, day-of-weak, etc.)
		\item $y_t \leftarrow \logl(B_t, \hp)$.
	\end{enumerate}
 
\item The regressor is set to 
\[
\hat{w} \leftarrow \argmin_{w \in \reals^d} \sum_{t=T_1+1}^T (y_t - \dotprod{w,v_t})^2.
\] 
\end{enumerate}
The outputs  of the training phase are $\hp$ and $\hat{w}$, where $\hp$ is given as a low-rank matrix decomposition $\hp = U\Lambda V\trn$ of some rank $k$, with $U\in\reals^{n \times k},\Lambda \in\reals^{k \times k},V\in\reals^{m\times k}$, where $\Lambda$ is diagonal. These values are then  used at the anomaly-detection phase to calculate the anomaly score given in Eq.~\eqref{dev}.	
										
\subsection{Unseen objects: cold start}\label{sec:coldstart}
We now explain how we handle the {\em cold start} problem \citep{A_Survey_of_Collaborative_Filtering_Techniques}, which refers to the fact that objects might be observed for the first time \emph{after} the training phase and the estimation of $\hp$. 
The
challenge is to assign
likelihood scores to matrices $B_t$ which include new rows or columns that
do no appear in
$\hp$. 
Previous solutions to the cold start problem in the context of collaborative filtering
have suggested finding an existing user whose pattern of accesses most resembles that of the new user,
and assigning the new user the same prediction as the existing user or a weighted score
of the most similar users \citep{shardanand1995social}. 
Our setup is slightly different, since we are not attempting to predict the values of specific matrix entries. 
In our algorithm, to calculate the log-likelihood of a matrix $B_t$ which includes rows or columns not
present in $\hp$,
we calculate a new version of $\hp$ which extends to these rows and columns. This is based on finding similar users/objects in \emph{latent space}, the low-rank space spanned by $\hp$. 
Each new user or object is projected onto latent space, via a process commonly known as {\em folding} \citep{MR1722790}.
Then, we find its nearest neighbor in the existing $\hp$, based on the distances in latent space. Finally, we assign the new row/column the same probability vector as its nearest neighbor. 
Using distances in latent space reduces the risk of overfitting, and also allows storing and searching over smaller matrices. 

Formally,
let $\btrain_1=\frac{1}{T_1}\sum_{i=1}^{T_1}B_i \in \mathbb{R}^{n \times m}$
be the matrix representing the aggregate access data from the training set $S_1$. Let $\hp=U\Lambda V\trn$ be the rank-$k$ model estimated in the first step of the training phase. Let $G = \btrain_1 V\in\reals^{n \times k}$ and $H = \btrain_1\trn U\in\reals^{m \times k}$.  
Let $G_i$ be the $i$'th row in $G$, and let $H_j$ be the $j$'th row in $H$. These are the latent representations of user $i$ and object $j$ from $\btrain_1$, respectively. \algoref{folded} gives the procedure $\foldlogl$ for calculating the folded log-likelihood of a new observation matrix $B_t$, assuming for simplicity that all new users/objects appear in the last rows/columns of $B_t$ and its dimensions are $n',m'$. The training and anomaly-detection algorithm described above are made to handle the cold start by replacing $\logl$ in the regression learning step and in the anomaly score step with $\foldlogl$.

	\begin{algorithm}[h]	
		\caption{$\mathrm{Folded}\logl(B_t,\hp,G,H,U,V)$}
                \label{alg:folded}
		\begin{algorithmic}  [1] 
			\State $\hp\fold\leftarrow\hp$
			\For{Each row $u_l$ in $B_t$, for $l \in \{n+1,\ldots,n'\}$}
			\State $u' \leftarrow u_l(1:m) \cdot V$ 
			\State$i \leftarrow\argmin_{i}\norm{u'-G_i}_2.$  
			\State Append row $i$ of $\hp\fold$ to the end of  $\hp\fold$. 
			\EndFor
			\For{each column $v_l$ in $B_t$ for $l \in \{m+1,\ldots, m'\}$}
			\State $v' \leftarrow v_l\trn(1:n) \cdot U$  
		\State$j  \leftarrow \argmin_{j}\norm{v' - H_j}_2.$
			\State  Append column  $j$ of $\hp\fold$ to the end of $\hp\fold$.
			\EndFor
			\State \textbf{Return}  $\logl(B_t,\hp\fold)$
		\end{algorithmic}
	\end{algorithm}  

\section{Experiments}\label{sec:experiments}
We tested our algorithm on several data sets. The properties of each data set are given in \tabref{datasets}.

\begin{table}[h]
\begin{center}
\resizebox{\columnwidth}{!}{%
\begin{tabular}{cccccc}
Data Set & Interval & \# intervals & users &objects  \\
&length&in data set \\
\toprule
TDA & 1 hour & $1488$ & $4702$ & $11654$ \\
Amazon & 1 day & $1894$ & $17612$ & $6451$\\
Movielens & 1 day & $1822$ & $29120$ & $24401$\\
Netflix & 1 day & $1565$ & $165405$ & $12938$ \\
TDA (small) & 1 hour & $1488$ & $1000$ & $1000$ \\
Amazon (small) & 1 day & $1894$ & $1000$ & $1000$\\
\bottomrule\\
\end{tabular}
}
\end{center}
\caption{Properties of the tested data sets}
\label{tab:datasets}
\end{table}

The first data set is TDA, which is described in \secref{intro} and is publicly available. TDA records accesses of users to database tables in a live real-world system, during one-hour intervals over a two-month period. 
The second data set is from  Amazon \citep{Lichman:2013}.
It specifies user permissions to resources inside the company during the time period $3.25.05$ --- $8.31.10$. 
The data set specifies which user had permissions to which resource at each day.
We further tested on the movie-rating data sets MovieLens \citep{harper2016movielens} and Netflix \citep{bennett2007netflix}.
In these two data sets, the objects are movies, and an access occurs when a user rates a movie.
It should be noted that while no single user-movie pair is repeated in the movie-rating data sets, anomalous behavior (e.g.~users rating movies of a genre they seldom rate) can still be identified using latent-factor analysis as performed by our algorithm.
We used MovieLens data from the years 2010-2014, during which the level of activity was fairly stable. We used Netflix data from the dates $12.8.99$ --- $4.19.14$, for which complete data was available.

The available data sets do not contain known anomalous accesses. Thus, in our experiments we
injected anomalous behavior into random intervals, as explained below.
We compare our algorithm (termed TLR in the figures below) to five baselines: First, the four algorithms described in \secref{rel}: SRMF \citep{Roughan:2012:SCS:2369156.2369159}, PCA \citep{Lakhina:2004:DNT:1015467.1015492} ,
RobustPCA \citep{Candes:2011:RPC:1970392.1970395} and MET \citep{TTB_MET}.
A fifth baseline, which we call MEAN, is similar to our algorithm, except that the deviation score is calculated with respect to the mean log-likelihood of the regression training set, without any adjustments based on regression. 
In each of the first four baselines, we used the default parameters as recommended by the authors.
For SRMF, PCA and MET, the anomaly scores were generated by evaluating the norm of the residual score on each interval. 
For RobustPCA, the anomaly score was generated by evaluating the norm of the sparse component of the interval.

The first four baseline algorithms all process the entire data set at once.
As a result, due to the size of our data sets, it was impossible to run these baselines on the full data sets on a reasonably high-capacity multicore server. Indeed, an important advantage of our algorithm is that it does not process the entire data set at once, and thus can handle much higher-dimensional data sets without requiring a large memory. Since we could not run these baselines on the full data sets, we ran a full comparison on a down-sampled version of the TDA and Amazon data sets, each including only 1000 users and 1000 objects, selected at random from each data set: this size was the largest that was feasible for all algorithms. We term these data sets below ``TDA (small)'' and ``Amazon (small)''. Down-sampling the Movielens and Netflix data set proved ineffective, since the result was so sparse that all algorithms failed completely. For the full data sets, we report the results of our algorithm and of the MEAN baseline.

For our algorithm, we used the following natural time-dependent features for regression,
inspired by \citet{5694074}:
A binary ``weekend'' feature,
the log-likelihood of the previous interval and of the one $24$ hours ago (for TDA) or  a week ago (for the others), the number of accesses in the current interval, the number of intervals since the last training set interval, day-of-the-week, and for TDA also hour of the day $h \in \{1,\ldots,24\}$ and shifted hour of the day ($(h + 12) \bmod 24$). 

\textbf{Accuracy of regression.}
\figref{regression} shows, for each full data set, the true log-likelihood of each test interval against the predicted log-likelihood based on the learned regressor $\hat{w}$. 
A straight diagonal line would indicate a perfect prediction.
Indeed, the prediction is quite successful for these data sets, and the correlation coefficients ($\rho$) are very close to one,
indicating that using linear regression here is reasonable. 
\begin{figure}[h] 
{\centering
	\hide{
 \includegraphics[width = 0.23\textwidth]{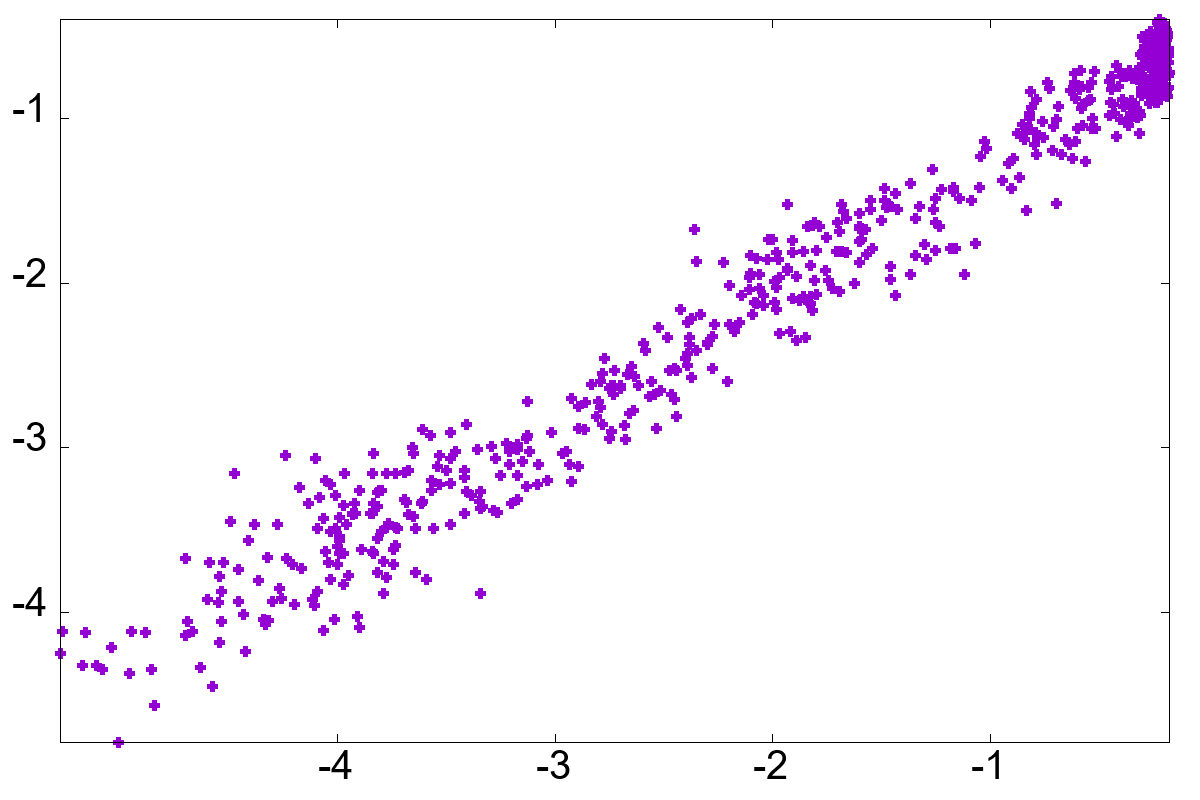}
 	 \includegraphics[width = 0.23\textwidth]{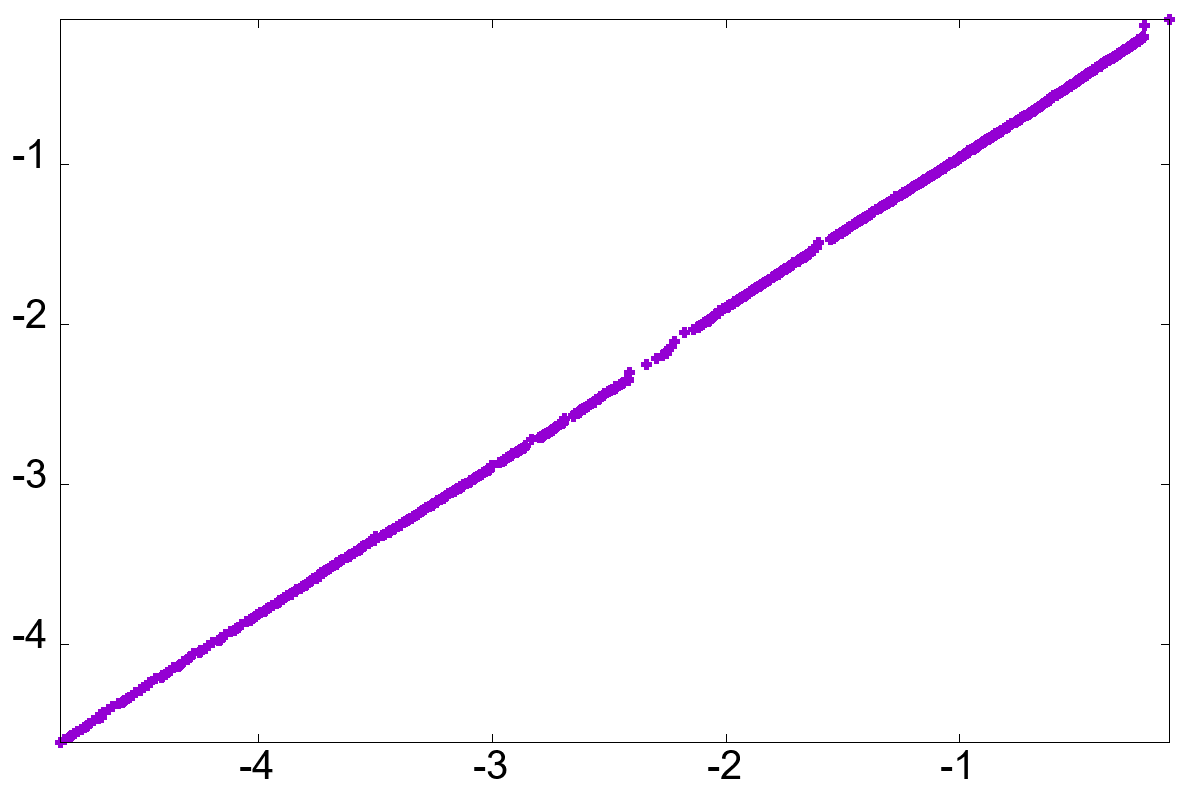}\\
	\includegraphics[width = 0.23\textwidth]{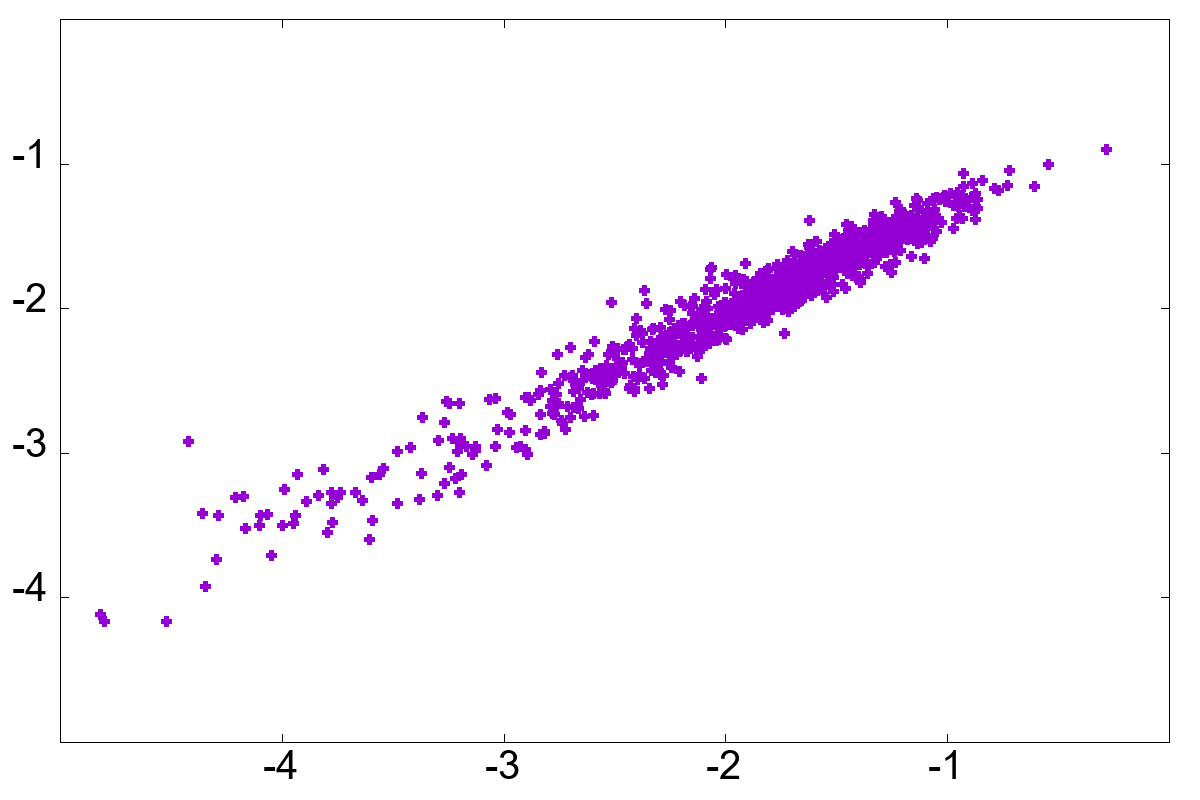}
		\includegraphics[width = 0.23\textwidth]{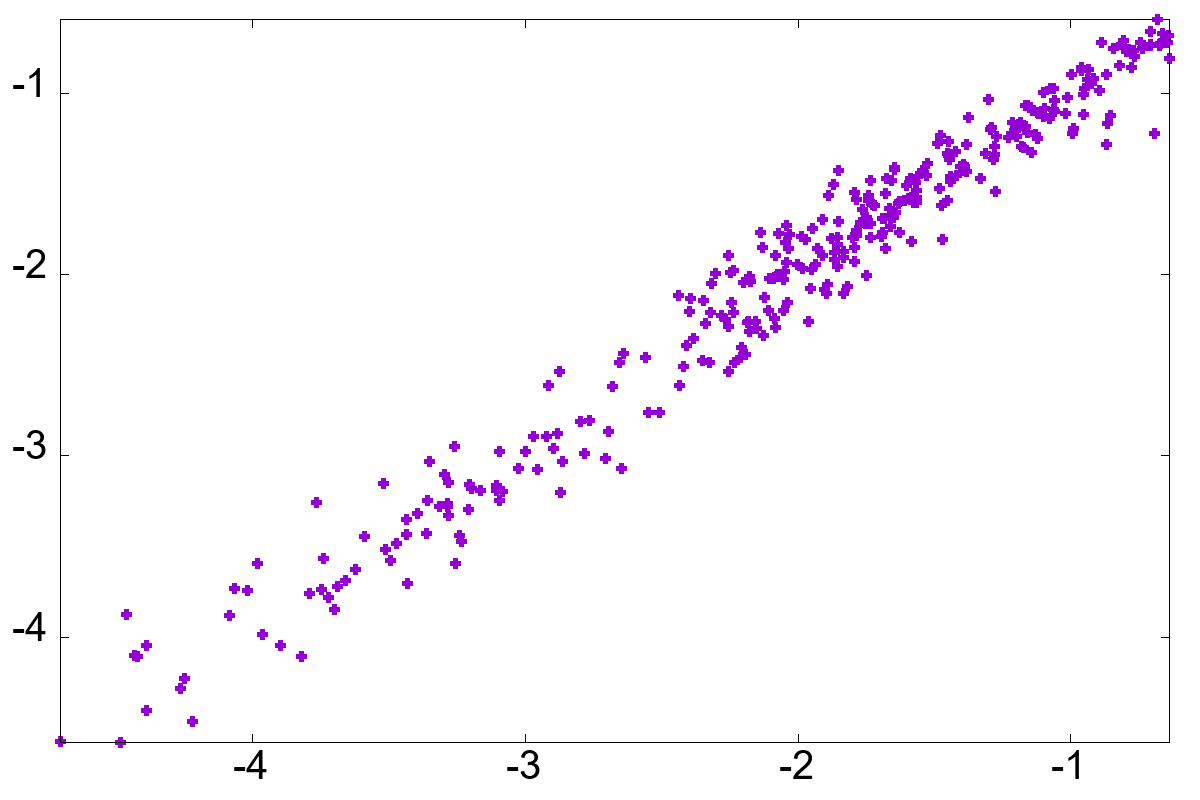}\\
		
		\includegraphics[width=0.23\textwidth]{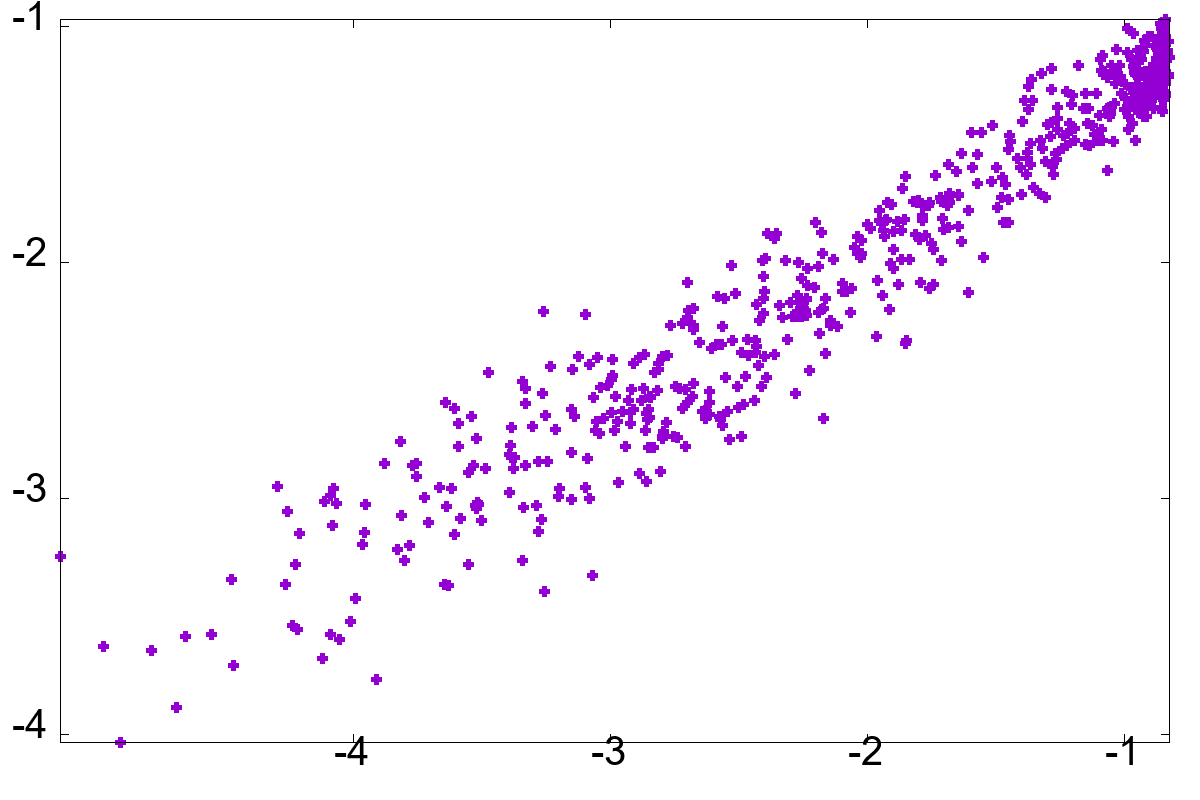}
		\includegraphics[width=0.23\textwidth]{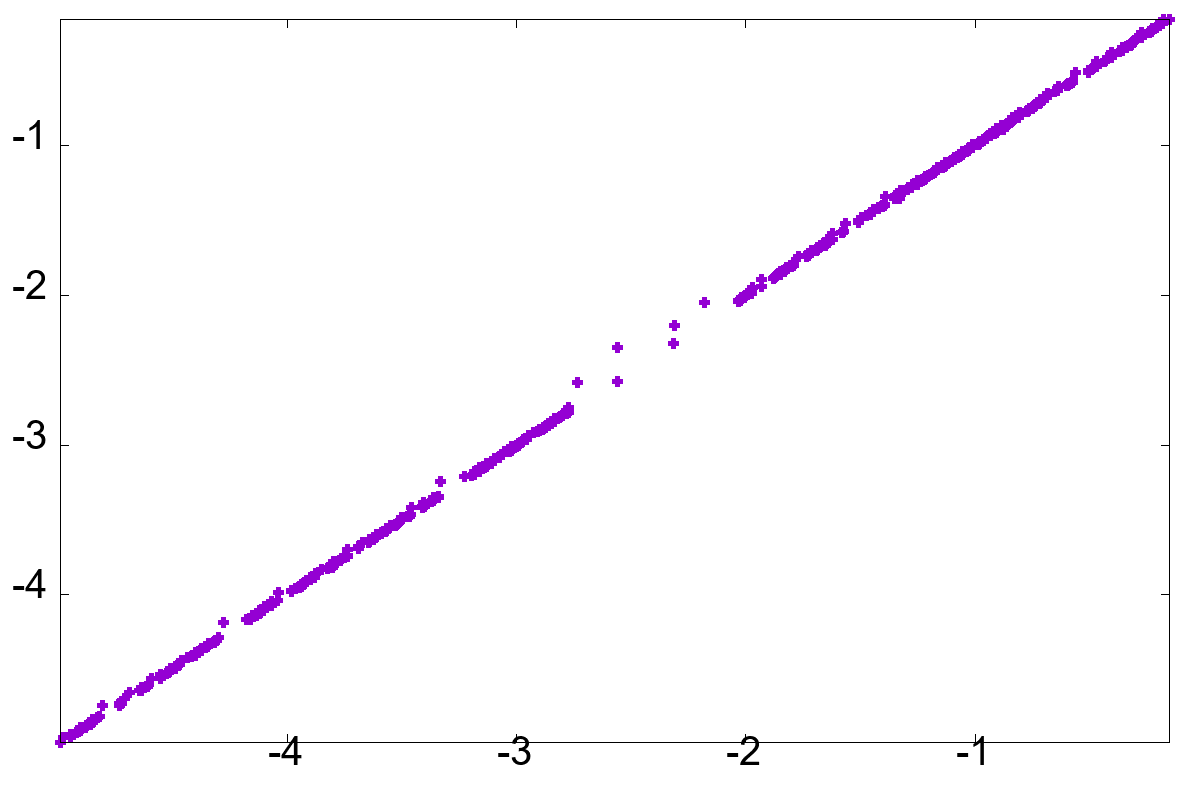}\\
	}
		\includegraphics[width=0.23\textwidth]{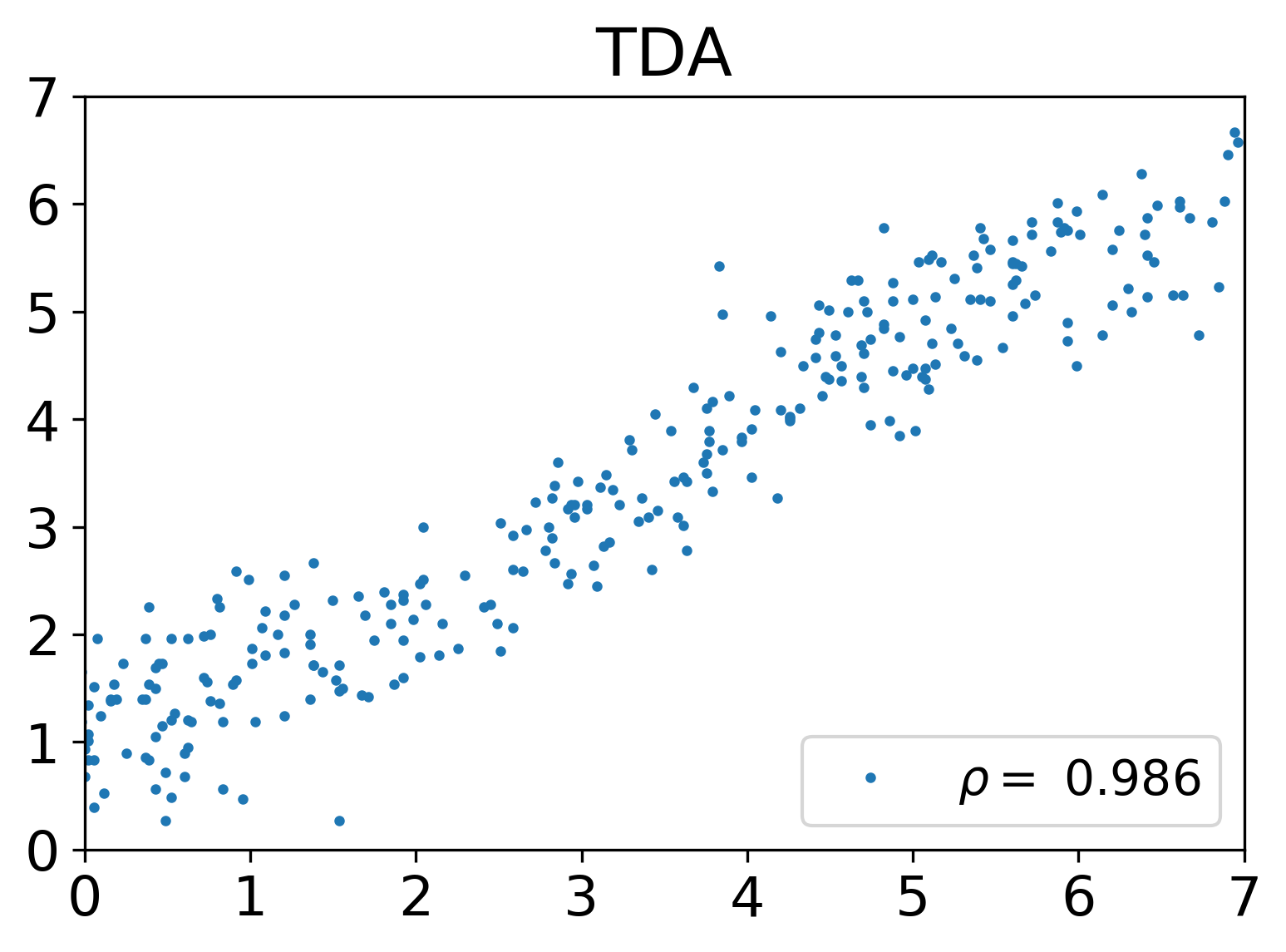}
			\includegraphics[width=0.23\textwidth]{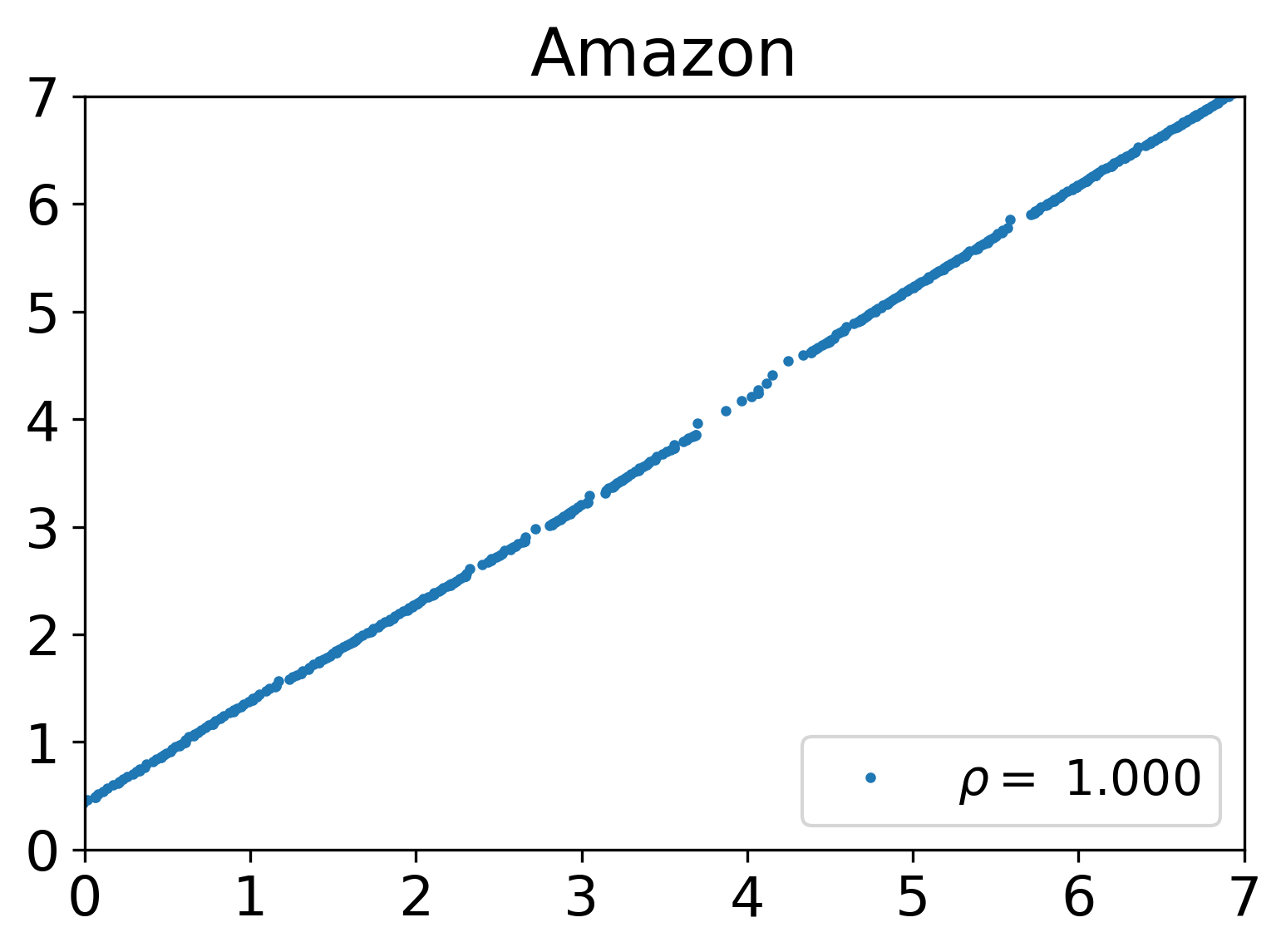}\\
				\includegraphics[width=0.23\textwidth]{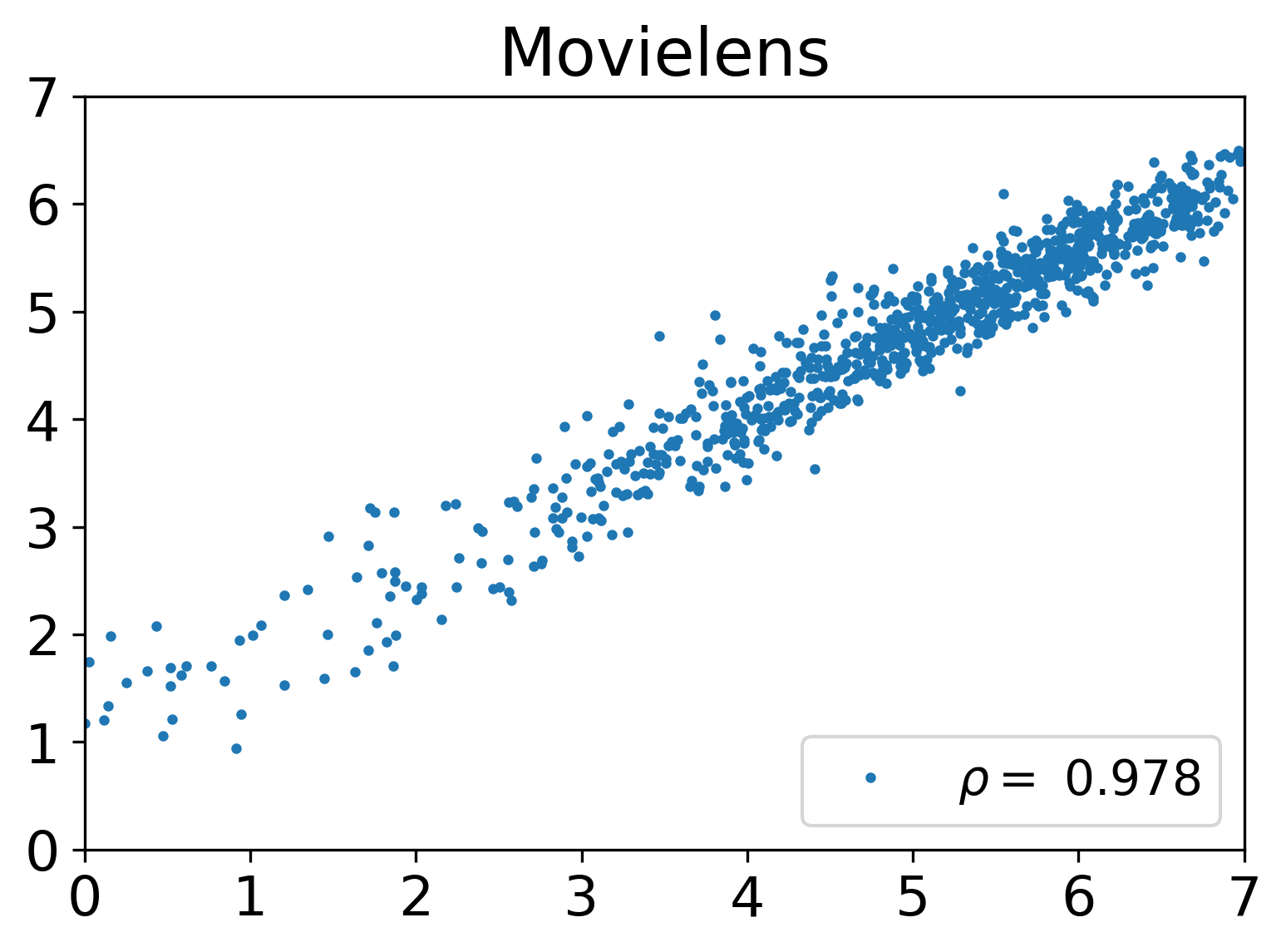}
					\includegraphics[width=0.23\textwidth]{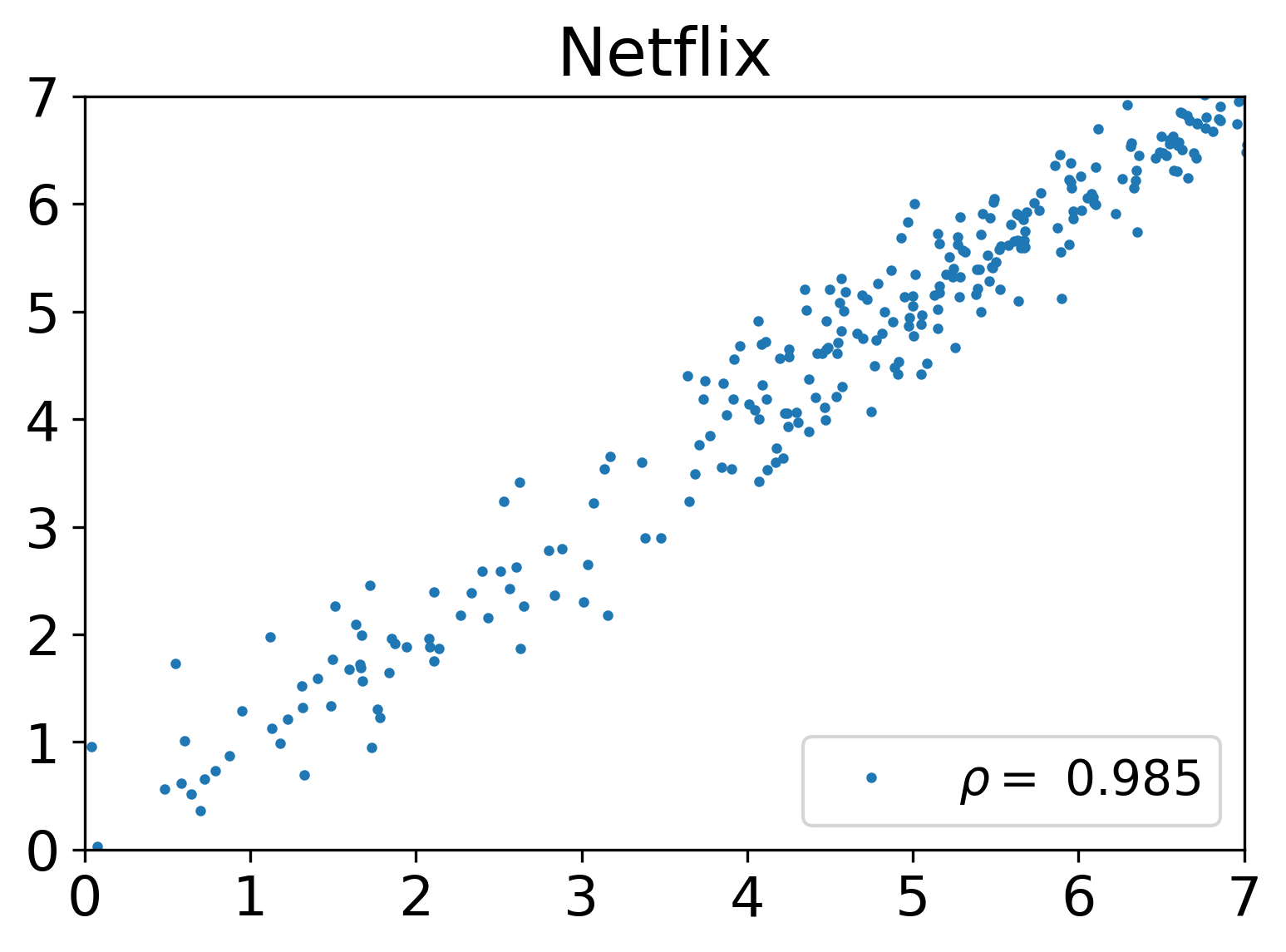}\\
					\includegraphics[width=0.23\textwidth]{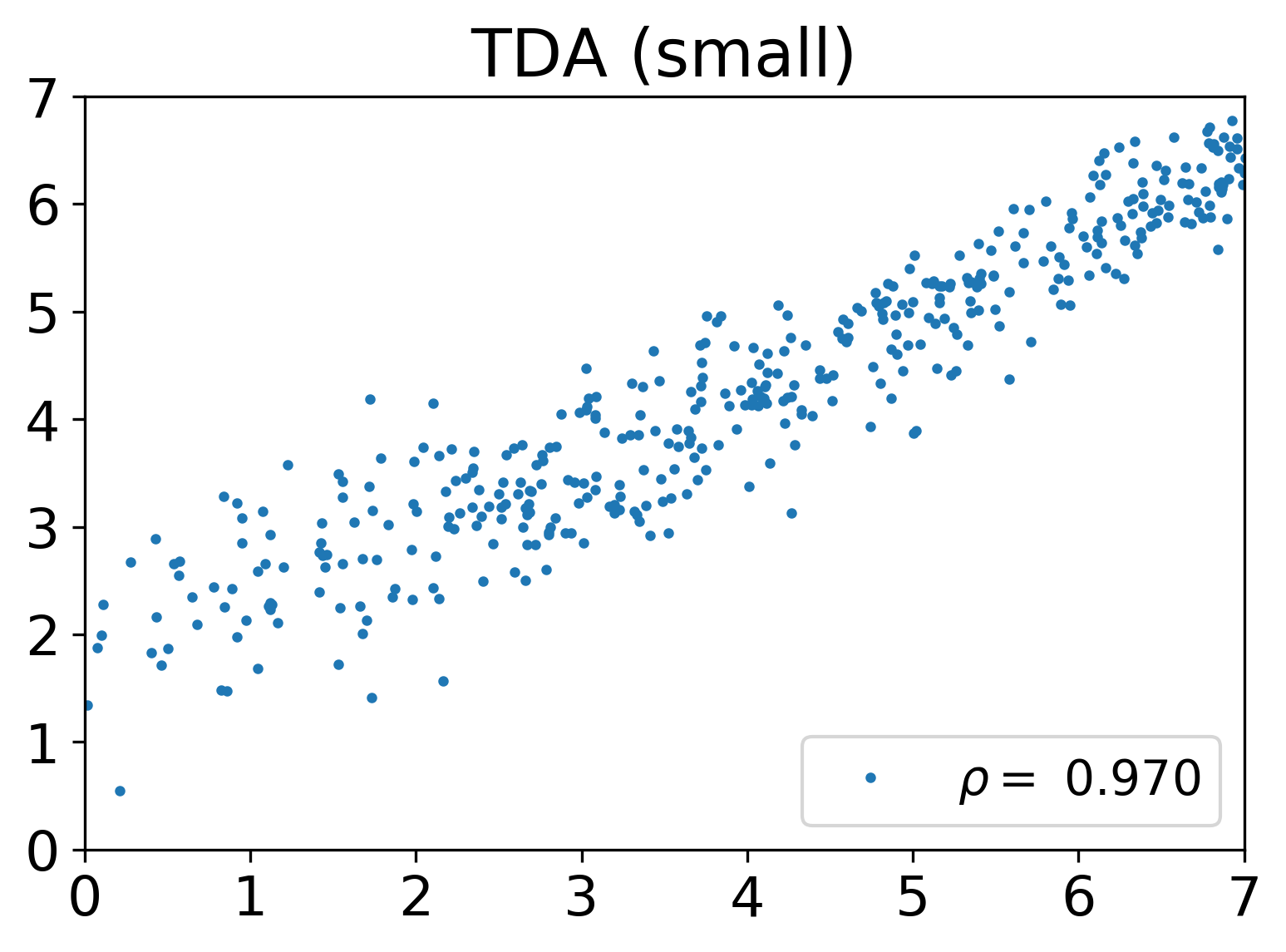}
					\includegraphics[width=0.23\textwidth]{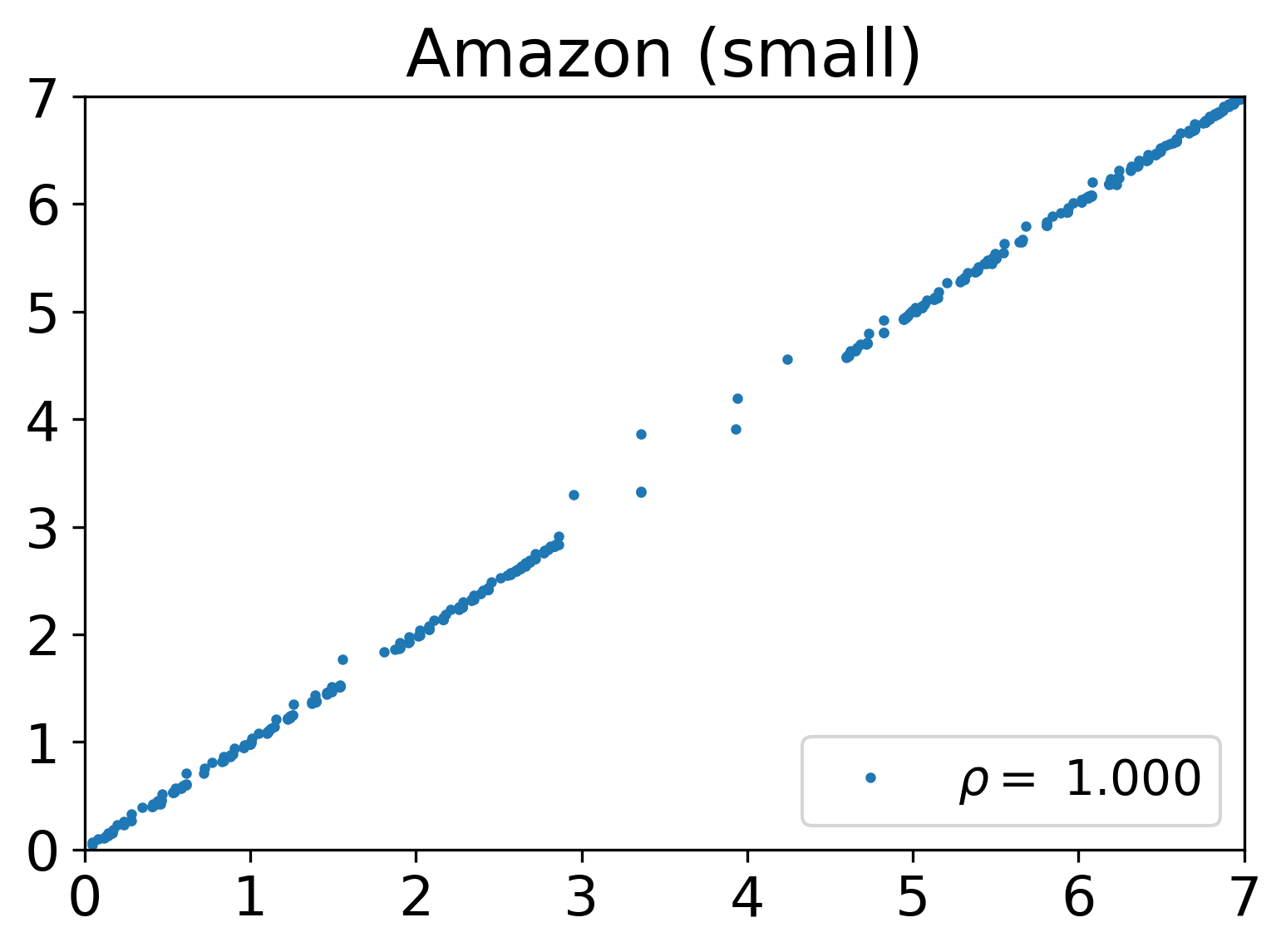}
		
 	\caption{True vs.~predicted log-likelihood on the test set (in arbitrary units). 
 	 }
\label{fig:regression}
}
 \end{figure}

 \textbf{Experiment I: Random accesses.} In this experiment, an anomalous interval is simulated by adding random accesses to it: Each bit
 in the interval's access matrix is changed to $1$ with an independent
 probability of $\epsilon >0$. We ran the algorithms 100 times on each data set, each time randomly selecting a single interval to simulate as anomalous. 
 For each noise level, we calculated the AUC (Area under the curve) of the combined ROC curve, and plotted it against the value of $\log_{10}(\epsilon)$. 
The results of our algorithm and of MEAN on the full data sets, as well as the results of all algorithms on the down-sampled data sets, are reported in \figref{noise}. It can be seen that the regression model improved the identification of the anomaly in a wide range of noise levels, and that our algorithm is usually better than all baselines.

 \figurenoise
\figurehistswitches

 \textbf{Experiment II: Accesses at an anomalous time.} We simulated a
 behavior which is normal at one time, but possibly anomalous at a different
 time, by replacing two randomly chosen intervals with each other. 
 This moves intervals to a time in which they might be unexpected, hence should be identified as anomalies. We ran the algorithms 100 times on each data set, each time replacing intervals of a single random pair. \figref{rocswitches} shows the ROC curves generated from the anomaly scores of each of the algorithms. Note that if the times of the intervals in the pair are similar, e.g., both in the morning of a workday, then no anomaly should be identified. Thus, even the best algorithm could have an AUC which is not very close to one. \figref{rocswitches} shows that  our algorithm is the most successful here. With the exception of SRMF, the other algorithms are not better than random guessing on this task. This should not be surprising, since these algorithms do not take into account the dependence on the timing of events.

\paragraph{Run time and memory} The baseline algorithms operate on a single matrix/tensor which includes the entire data set, and with no distinction between training time and test time.
 As a result, the memory requirements of these algorithms are prohibitive for large data sets. In contrast, our algorithm processes single-interval matrices one by one. Therefore, it  requires significantly less memory. In addition, our algorithm performs matrix optimization only during training time, 
while the calculation of anomaly scores during test time is fast, and requires a small memory footprint for the trained model. This allows real-time anomaly detection as new matrices appear. \tabref{timeAnalysis} reports the run time of each of the algorithms on the tested data sets. The baseline algorithms could only be run on the reduced data sets due to their memory requirements. Our approach shows a clear run-time advantage, during training and more so during testing. 

\tableruntime

\section{Conclusions}
The experiments demonstrate that our approach obtains superior results to previous algorithms, while requiring significantly less computational resources. While we focused here on identifying anomalous time intervals, this approach can be adapted to identifying specific users or objects which are anomalous.
An additional important challenge is to develop a streaming version of the training stage.
These adaptations will be studied in future work.

\section{Acknowledgements}
This work was supported in part by an IBM grant, and by the Israel Science Foundation (grants~555/15, 755/15).
 
\fontsize{9pt}{10pt} \selectfont

\bibliographystyle{aaai}

\begin{thebibliography}{}

\end{thebibliography}


\begin{thebibliography}{}
		
		\bibitem[\protect\citeauthoryear{Ahmed, Coates, and Lakhina}{2007}]{4215661}
		T. Ahmed; M. Coates; A. Lakhina.
		\newblock 2007.
		\newblock Multivariate online anomaly detection using kernel recursive least
		squares. INFOCOM.
		
		\bibitem[\protect\citeauthoryear{Bennett, Lanning, and
			others}{2007}]{bennett2007netflix}
		Bennett, J.; Lanning, S.; et~al.
		\newblock 2007.
		\newblock The netflix prize.
		\newblock In {\em Proceedings of KDD cup and workshop}, volume 2007, ~35.
		
		\bibitem[\protect\citeauthoryear{Boucheron, Lugosi, and
			Massart}{2013}]{MR3185193}
		Boucheron, S.; Lugosi, G.; and Massart, P.
		\newblock 2013.
		\newblock {\em Concentration inequalities: A nonasymptotic theory of independence}.
		
		\bibitem[\protect\citeauthoryear{Cand\`{e}s \bgroup et al\mbox.\egroup
		}{2011}]{Candes:2011:RPC:1970392.1970395}
		Cand\`{e}s, E.~J.; Li, X.; Ma, Y.; and Wright, J.
		\newblock 2011.
		\newblock Robust principal component analysis?
		\newblock {\em J. ACM} 58(3):11:1--11:37.
		
		\bibitem[\protect\citeauthoryear{Chan, Mahoney, and
			Arshad}{2003}]{chan2003machine}
		Chan, P.~K.; Mahoney, M.~V.; and Arshad, M.~H.
		\newblock 2003.
		\newblock A machine learning approach to anomaly detection.
		\newblock
		Florida Inst.
		Tech.
		
		\bibitem[\protect\citeauthoryear{Chung, Gertz, and
			Levitt}{1999}]{oai:CiteSeerPSU:231451}
		Chung, C.~Y.; Gertz, M.; and Levitt, K.
		\newblock 1999.
		\newblock {DEMIDS}: {A} misuse detection system for database systems.
		
		\bibitem[\protect\citeauthoryear{Das and Schneider}{2007}]{das2007detecting}
		Das, K., and Schneider, J.
		\newblock 2007.
		\newblock Detecting anomalous records in categorical datasets.
		SIGKDD
		
		\bibitem[\protect\citeauthoryear{Das, Schneider, and
			Neill}{2008}]{das2008anomaly}
		Das, K.; Schneider, J.; and Neill, D.~B.
		\newblock 2008.
		\newblock Anomaly pattern detection in categorical datasets.
		SIGKDD
		
		\bibitem[\protect\citeauthoryear{Davenport \bgroup et al\mbox.\egroup
		}{2014}]{davenport20141}
		Davenport, M.~A.; Plan, Y.; van~den Berg, E.; Wootters, M.
		\newblock 2014.
		\newblock 1-bit matrix completion.
		\newblock {\em Information and Inference}.
		
		\bibitem[\protect\citeauthoryear{Deerwester \bgroup et al\mbox.\egroup
		}{1990}]{deerwester1990indexing}
		Deerwester, S.; Dumais, S.~T.; Furnas, G.~W.; Landauer, T.~K.; and Harshman, R.
		\newblock 1990.
		\newblock Indexing by latent semantic analysis.
		\newblock {\em Journal of the American society for information science}
		41(6):391.
		
		\bibitem[\protect\citeauthoryear{Fazel}{2002}]{fazel2002matrix}
		Fazel, M.
		\newblock 2002.
		\newblock {\em Matrix rank minimization with applications}.
		\newblock Ph.D. Dissertation, PhD thesis, Stanford University.
		
		\bibitem[\protect\citeauthoryear{G{\"{o}}rnitz, Braun, and
			Kloft}{2015}]{DBLP:conf/icml/GornitzBK15}
		G{\"{o}}rnitz, N.; Braun, M.~L.; and Kloft, M.
		\newblock 2015.
		\newblock Hidden markov anomaly detection.
		ICML.
		
		\bibitem[\protect\citeauthoryear{G\"{u}nnemann, G\"{u}nnemann, and
			Faloutsos}{2014}]{Gunnemann:2014:DAD:2623330.2623721}
		G\"{u}nnemann, S.; G\"{u}nnemann, N.; and Faloutsos, C.
		\newblock 2014.
		\newblock Detecting anomalies in dynamic rating data: A robust probabilistic
		model for rating evolution.
		SIGKDD
		
		\bibitem[\protect\citeauthoryear{Harper and
			Konstan}{2016}]{harper2016movielens}
		Harper, F.~M., Konstan, J.~A.
		\newblock 2016.
		\newblock The movielens datasets: History and context.
		\newblock {\em ACM TIIS}.
		
		\bibitem[\protect\citeauthoryear{H{\"o}hle}{2010}]{hohle2010online}
		H{\"o}hle, M.
		\newblock 2010.
		\newblock Online change-point detection in categorical time series.
		\newblock In {\em Statistical modelling and regression structures}.
		
		\bibitem[\protect\citeauthoryear{Horn and Johnson}{2013}]{MR2978290}
		Horn, R.~A., and Johnson, C.~R.
		\newblock 2013.
		\newblock {\em Matrix analysis}.
		\newblock Cambridge University Press, Cambridge, second edition.
		
		\bibitem[\protect\citeauthoryear{Hsieh, Natarajan, and
			Dhillon}{2015}]{PU_Learning_for_Matrix_Completion}
		Hsieh, C.-J.; Natarajan, N.; and Dhillon, I.~S.
		\newblock 2015.
		\newblock Pu learning for matrix completion.
		\newblock In {\em ICML},  2445--2453.
		
		\bibitem[\protect\citeauthoryear{Kamra, Terzi, and
			Bertino}{2008}]{journals/vldb/KamraTB08}
		Kamra, A.; Terzi, E.; and Bertino, E.
		\newblock 2008.
		\newblock Detecting anomalous access patterns in relational databases.
		\newblock {\em VLDB J} 17(5):1063--1077.
		
		\bibitem[\protect\citeauthoryear{Khaleghi and
			Ryabko}{2014}]{khaleghi2014asymptotically}
		Khaleghi, A., and Ryabko, D.
		\newblock 2014.
		\newblock Asymptotically consistent estimation of the number of change points
		in highly dependent time series.
		\newblock In {\em ICML},  539--547.
		
		\bibitem[\protect\citeauthoryear{Kolda and Sun}{2008}]{TTB_MET}
		Kolda, T.~G., and Sun, J.
		\newblock 2008.
		\newblock Scalable tensor decompositions for multi-aspect data mining.
		ICDM.
		
		\bibitem[\protect\citeauthoryear{Lakhina, Crovella, and
			Diot}{2004}]{Lakhina:2004:DNT:1015467.1015492}
		Lakhina, A.; Crovella, M.; and Diot, C.
		\newblock 2004.
		\newblock Diagnosing network-wide traffic anomalies.
		SIGCOMM.
		
		\bibitem[\protect\citeauthoryear{Larsen}{2000}]{larsen2000computing}
		Larsen, R.~M.
		\newblock 2000.
		\newblock Computing the svd for large and sparse matrices.
		\newblock {\em SCCM, Stanford University, June} 16.
		
		\bibitem[\protect\citeauthoryear{Lata{\l}a}{2005}]{latala2005some}
		Lata{\l}a, R.
		\newblock 2005.
		\newblock Some estimates of norms of random matrices.
		\newblock {\em Proc. AMS}
		133(5):1273--1282.
		
		\bibitem[\protect\citeauthoryear{Ledoux and Talagrand}{1991}]{LedouxTal91}
		Ledoux, M., and Talagrand, M.
		\newblock 1991.
		\newblock {\em Probability in Banach Spaces}.
		\newblock Springer-Verlag.
		
		\bibitem[\protect\citeauthoryear{Lee, Low, and Wong}{2002}]{LeeLowWon02}
		Lee, S.~Y.; Low, W.~L.; and Wong, P.~Y.
		\newblock 2002.
		ESORICS.
		
		\bibitem[\protect\citeauthoryear{Leskovec, Rajaraman, and
			Ullman}{2014}]{Mining_of_massive_datasets}
		Leskovec, J.; Rajaraman, A.; and Ullman, J.~D.
		\newblock 2014.
		\newblock {\em Mining of massive datasets}.
		\newblock Cambridge University Press.
		
		\bibitem[\protect\citeauthoryear{Levi, Allouche, and
			Kontorovich}{2017}]{DBLP:conf/vtc/LeviAK18}
		Levi, M.; Allouche, Y.; and Kontorovich, A.
		\newblock 2018.
		\newblock Advanced analytics for connected cars cyber security.
		\newblock In {\em 87th {IEEE} Vehicular Technology Conference, {VTC} Spring 2018}.
				
		\bibitem[\protect\citeauthoryear{Lichman}{2013}]{Lichman:2013}
		Lichman, M.
		\newblock 2013.
		\newblock {UCI} machine learning repository.
		
		\bibitem[\protect\citeauthoryear{Manning and Sch\"utze}{1999}]{MR1722790}
		Manning, C.~D., and Sch\"utze, H.
		\newblock 1999.
		\newblock {\em Foundations of statistical natural language processing}.
		\newblock MIT Press, Cambridge, MA.
		
		\bibitem[\protect\citeauthoryear{Mathew \bgroup et al\mbox.\egroup
		}{2010}]{conf/raid/MathewPNU10}
		Mathew, S.; Petropoulos, M.; Ngo, H.~Q.; and Upadhyaya, S.~J.
		\newblock 2010.
		\newblock A data-centric approach to insider attack detection in database
		systems.
		RAID.
		
		\bibitem[\protect\citeauthoryear{Mazumder, Hastie, and
			Tibshirani}{2010}]{mazumderSpectral}
		Mazumder, R.; Hastie, T.; and Tibshirani, R.
		\newblock 2010.
		\newblock Spectral regularization algorithms for learning large incomplete
		matrices.
		\newblock {\em Journal of machine learning research} 11(Aug):2287--2322.
		
		\bibitem[\protect\citeauthoryear{McDiarmid}{1989}]{mcdiarmid1989method}
		McDiarmid, C.
		\newblock 1989.
		\newblock On the method of bounded differences.
		\newblock {\em Surveys in combinatorics} 141(1):148--188.
		
		\bibitem[\protect\citeauthoryear{Rendle}{2010}]{5694074}
		Rendle, S.
		\newblock 2010.
		\newblock Factorization machines.
		\newblock In {\em 2010 IEEE International Conference on Data Mining},
		995--1000.
		
		\bibitem[\protect\citeauthoryear{Roughan \bgroup et al\mbox.\egroup
		}{2012}]{Roughan:2012:SCS:2369156.2369159}
		Roughan, M.; Zhang, Y.; Willinger, W.; and Qiu, L.
		\newblock 2012.
		\newblock Spatio-temporal compressive sensing and internet traffic matrices.
		\newblock {\em IEEE/ACM Trans. Netw.}.
		
		\bibitem[\protect\citeauthoryear{Santos, Bernardino, and
			Vieira}{2014}]{journals/sigmod/SantosBV14}
		Santos, R.~J.; Bernardino, J.; and Vieira, M.
		\newblock 2014.
		\newblock Approaches and challenges in database intrusion detection.
		\newblock {\em SIGMOD Record} 43(3):36--47.
		
		\bibitem[\protect\citeauthoryear{Shamir and
			Shalev-Shwartz}{2014}]{Matrix_Completion_Trace_Norm}
		Shamir, O., and Shalev-Shwartz, S.
		\newblock 2014.
		\newblock Matrix completion with the trace norm: learning, bounding, and
		transducing.
		\newblock {\em Journal of Machine Learning Research} 15(1):3401--3423.
		
		\bibitem[\protect\citeauthoryear{Shardanand and
			Maes}{1995}]{shardanand1995social}
		Shardanand, U., and Maes, P.
		\newblock 1995.
		\newblock Social information filtering: algorithms for automating "word of
		mouth".
		SIGCHI.
		
		\bibitem[\protect\citeauthoryear{Sindhwani \bgroup et al\mbox.\egroup
		}{2010}]{One-Class_Matrix_Completion_with_Low-Density_Factorizations}
		Sindhwani, V.; Bucak, S.~S.; Hu, J.; and Mojsilovic, A.
		\newblock 2010.
		\newblock One-class matrix completion with low-density factorizations.
		ICMD.
		
		\bibitem[\protect\citeauthoryear{Soule \bgroup et al\mbox.\egroup
		}{2005}]{Soule:2005:TMT:1111572.1111580}
		Soule, A.; Salamatian, K.; Nucci, A.; and Taft, N.
		\newblock 2005.
		\newblock Traffic matrix tracking using kalman filters.
		\newblock {\em SIGMETRICS Perform. Eval. Rev.} 33(3):24--31.
		
		\bibitem[\protect\citeauthoryear{Spalka and
			Lehnhardt}{2005}]{conf/dbsec/SpalkaL05}
		Spalka, A., and Lehnhardt, J.
		\newblock 2005.
		\newblock A comprehensive approach to anomaly detection in relational
		databases.
		Data and Applications Security.
		
		\bibitem[\protect\citeauthoryear{Srebro, Alon, and
			Jaakkola}{2004}]{Generalization_Error_Bounds_Collaborative_Prediction}
		Srebro, N.; Alon, N.; and Jaakkola, T.~S.
		\newblock 2004.
		\newblock Generalization error bounds for collaborative prediction with
		low-rank matrices.
		NIPS.
		
		\bibitem[\protect\citeauthoryear{Srivastava, Sural, and
			Majumdar}{2006}]{journals/jcp/SrivastavaSM06}
		Srivastava, A.; Sural, S.; and Majumdar, A.~K.
		\newblock 2006.
		\newblock Database intrusion detection using weighted sequence mining.
		\newblock {\em JCP} 1(4):8--17.
		
		\bibitem[\protect\citeauthoryear{Su and
			Khoshgoftaar}{2009}]{A_Survey_of_Collaborative_Filtering_Techniques}
		Su, X., and Khoshgoftaar, T.~M.
		\newblock 2009.
		\newblock A survey of collaborative filtering techniques.
		\newblock {\em Advances in artificial intelligence}.
		
		\bibitem[\protect\citeauthoryear{Takeuchi and
			Yamanishi}{2006}]{takeuchi2006unifying}
		Takeuchi, J.-i., and Yamanishi, K.
		\newblock 2006.
		\newblock A unifying framework for detecting outliers and change points from
		time series.
		\newblock {\em IEEE Trans. Knowl. Data Eng.}
		
		\bibitem[\protect\citeauthoryear{Tartakovsky \bgroup et al\mbox.\egroup
		}{2006}]{tartakovsky2006novel}
		Tartakovsky, A.~G.; Rozovskii, B.~L.; Blazek, R.~B.; and Kim, H.
		\newblock 2006.
		\newblock A novel approach to detection of intrusions in computer networks via
		adaptive sequential and batch-sequential change-point detection methods.
		\newblock {\em IEEE Trans. Sig. Proc.} 54(9):3372--3382.
		
		\bibitem[\protect\citeauthoryear{Zhou et al.}{2017}]{zhou2017anomaly}
		Zhou, C. and Paffenroth, R.~C.
		\newblock 2017.
		\newblock Anomaly detection with robust deep autoencoders.
		SIGKDD, 665--674.
		
		\bibitem[\protect\citeauthoryear{Azzouni et al.}{2017}]{azzouni2017long}
		Azzouni, A. and Pujolle, G.
		\newblock 2017.
		\newblock A long short-term memory recurrent neural network framework for network traffic matrix prediction.
		\newblock arXiv:1705.05690.
		
	\end{thebibliography}

\appendix

\section{Appendix: Convergence analysis}
 
\label{ap:analysis}

For two real-valued matrices $X,Y \in \reals^{n \times m}$, define the mean squared error between $X$ and $Y$ by 
\[
\mse(X,Y) :=\frac{1}{mn} \norm{X - Y}_F^2.
\] 

Recall that $\pnull$ is the assumed stationary model, while $\hp$ is our estimate of that stationary model.
Below we show that by finding $F(\btrain, \lambda)$, as defined in \eqref{QP}, we are effectively minimizing
$\mse(\hp,\pnull)$ up to a
term that decays to zero.
\eqref{QP} minimizes the squared Frobenius norm $\norm{\hp - B}_F^2$ with an additional regularization term $\lambda \norm{\hp}\tno$. This is equivalent to minimizing $\norm{\hp - B}_F^2$ subject to a constraint on $\norm{\hp}\tno$. 
Moreover, if $S = (B_1,\ldots,B_T)$ and $\btrain = \frac{1}{T}\sum_{t=1}^{T}B_t$, we have
\begin{align*}
&\norm{\hp - \btrain}_F^2 = \norm{\hp}_F^2 - 2\dotprod{\hp,\btrain} + \norm{\btrain}_F^2 =\\ &\frac{1}{T}\sum_{t=1}^{T} \norm{\hp - B_t}_F^2 - \frac{1}{T}\sum_{t=1}^{T} \norm{B_t}_F^2 + \norm{\btrain}_F^2.
\end{align*}
Therefore, 
\begin{align}
&\norm{\hp - \btrain}_F^2 = \frac{1}{T}\sum_{t=1}^T\norm{\hp - B_t}_F^2 +C\\
&= mn\cdot \frac{1}{T}\sum_{t=1}^T\mse(\hp,B_t) + C,
\end{align}

where $C$ is a constant independent of $\hp$. 
By a similar derivation, for another constant $C'$ independent of $\hp$,
\[
\E_{B \sim \pnull}[\mse(\hp,B)] = \mse(\hp,\pnull) + C'.
\] 
Therefore, if $|\E_{B \sim \pnull}[\mse(\hp,B)]  - \frac{1}{T}\sum_{t=1}^T\mse(\hp,B_t)|$ is small, finding $F(\btrain,\lambda)$ is a good proxy for minimizing $\mse(\hp,\pnull)$. We prove a more general claim, which bounds the diffrence between the empirical loss and the true loss of a general Lipschitz loss, which we define below. Our result generalizes a result from \cite{PU_Learning_for_Matrix_Completion} to the case of a
  training sample with several matrices. The proof employs techniques from \cite{PU_Learning_for_Matrix_Completion} and \cite{Matrix_Completion_Trace_Norm}. 

Let $\ell:[0,1] \times [0,1] \rightarrow \reals_+$ be a loss function. For two matrices $A,B \in [0,1]^{n \times m}$, let 
\[
\ell(A,B) := \frac{1}{nm} \sum_{i \in [n], j\in [m]}\ell(A(i,j), B(i,j)).
\]
For $L > 0$, $\ell$ is $L$-Lipschitz in the first argument if 
\[
\forall x,x',y\in \reals, |\ell(x,y) - \ell(x',y)| \leq L|x -x'|,
\]
with an analogous definition
for the second argument.

For a matrix $X\in[0,1]^{n \times m}$ and a distribution $P$ over matrices $[0,1]^{n \times m}$;
denote the \emph{true loss} of $X$ by $\ell(X,P) := \E_{Y \in P}[\ell(X,Y)]$. For  a (multi)set $S \subseteq [0,1]^{n \times m}$, denote the \emph{empirical loss} of $X$ on $S$ by $\ell(X,S) := \frac{1}{|S|}\sum_{Y \in S}[\ell(X,Y)]$.

The theorem below gives a guarantee for $L$-Lipschitz losses. 
Note that for the squared-loss $\ell := \mse$, where  $\mse(x,y):= (x-y)^2$, $\ell$ is $2$-Lipschitz in both arguments: 
For $x,x',y \in [0,1]$, 
\begin{align*}
& |(x-y)^2-(x'-y)^2|=|x^{2}-x'^{2} -2xy+2x'y|\\
&= |(x-x')(x+x')-2y(x-x')| =|(x-x')(x+x'-2y)|\\
&= |(x-x')||(x+x'-2y)|\leq 2|x-x'|.
\end{align*}
Thus the theorem below holds for the $\mse$ loss with $L = 2$, proving that as the size of the training set grows, minimizing \eqref{QP} converges to a minimization of $\mse(\hp,\pnull)$ with high probability. In the following theorem and proof, we use $c$ to indicate a universal constant, whose value can change from line to line.
\begin{theorem}
Let $\ell(x,y)$ be an $L$-Lipschitz loss.
Let $\pnull \in [0,1]^{n \times m}$, and let $\cD_\pnull$ be the distribution satisfying the probabilistic model defined in \secref{approach}. Assume w.l.o.g.~that $m \geq n$. 
Let $S = (B_1,B_2,\ldots,B_T) \sim \cD_\pnull^m$ be an i.i.d.~sample from $\cD_\pnull$.
With a probability of at least $1-\delta$, for all $\pi \in [0,1]^{n \times m}$ such that $\norm{\pi}\tno\leq \gamma$, 

  \[
  |\ell(\pi,\cD_{\pnull}) - \ell(\pi,S)| \leq c\frac{L\gamma}{n\sqrt{mT}}+L\sqrt{\frac{\ln(\frac{2}{\delta})}{2Tmn}}.
  \]
\label{thm:convergence}
\end{theorem}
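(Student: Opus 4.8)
The plan is to prove \thmref{convergence} by a by-now-standard uniform-convergence argument — bounded differences plus a Rademacher-complexity estimate for the trace-norm ball — but carried out at the level of individual matrix entries rather than individual matrices, generalizing the single-matrix analysis of \cite{PU_Learning_for_Matrix_Completion} (which is the $T=1$ case) using the techniques of \cite{Matrix_Completion_Trace_Norm}. Write $\Phi(S):=\sup_{\pi:\,\norm{\pi}\tno\le\gamma}\big|\ell(\pi,\cD_\pnull)-\ell(\pi,S)\big|$; the goal is to show $\Phi(S)$ is small with probability $\ge 1-\delta$. The key structural observation is that under the model of \secref{approach} the $Tmn$ scalar entries $\{B_t(i,j):t\in[T],i\in[n],j\in[m]\}$ are \emph{mutually} independent Bernoulli variables, and, by definition of the entrywise loss,
\[
\ell(\pi,S)-\ell(\pi,\cD_\pnull)=\frac{1}{Tmn}\sum_{t,i,j}\big(\ell(\pi(i,j),B_t(i,j))-\E\,\ell(\pi(i,j),B_t(i,j))\big),
\]
an average of $Tmn$ independent, bounded summands (the expectation being over $B_t(i,j)$).

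First I would apply McDiarmid's bounded-differences inequality \cite{mcdiarmid1989method} to $\Phi$ as a function of these $Tmn$ independent coordinates. Flipping a single entry $B_t(i,j)$ changes $\ell(\pi,S)$ by at most $L/(Tmn)$ for \emph{every} $\pi$ (it affects only one summand, and $\ell$ is $L$-Lipschitz), hence changes $\Phi(S)$ by at most $L/(Tmn)$. The sum of squared differences is thus $Tmn\cdot\big(L/(Tmn)\big)^2=L^2/(Tmn)$, so McDiarmid gives $\Phi(S)\le \E[\Phi(S)]+L\sqrt{\ln(2/\delta)/(2Tmn)}$ with probability at least $1-\delta$, which is exactly the second term of the claimed bound.

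Next I would bound $\E[\Phi(S)]$. Because all $Tmn$ entries are independent, symmetrization may be performed at the level of individual entries, introducing i.i.d.\ Rademacher signs $\sigma_{t,i,j}$: $\E[\Phi(S)]\le 2\,\E_{S,\sigma}\sup_{\pi:\norm{\pi}\tno\le\gamma}\big|\frac{1}{Tmn}\sum_{t,i,j}\sigma_{t,i,j}\,\ell(\pi(i,j),B_t(i,j))\big|$. Since $\ell(\cdot,y)$ is $L$-Lipschitz, the Ledoux--Talagrand contraction principle \cite{LedouxTal91}, applied with the index-dependent $L$-Lipschitz maps $x\mapsto\ell(x,B_t(i,j))$, strips the loss at the cost of a factor $cL$, leaving $\E_{S,\sigma}\sup_{\pi}\big|\frac{1}{Tmn}\sum_{t,i,j}\sigma_{t,i,j}\,\pi(i,j)\big|$; here it is convenient to first relax the constraint set to $\{\pi:\norm{\pi}\tno\le\gamma\}$, which is symmetric under $\pi\mapsto-\pi$ and only enlarges the supremum, so the outer absolute value can be dropped. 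Now collapse the sum over $t$: setting $\tau_{i,j}:=\sum_{t=1}^T\sigma_{t,i,j}$, the expression becomes $\frac{cL}{Tmn}\,\E\sup_{\norm{\pi}\tno\le\gamma}\dotprod{\tau,\pi}=\frac{cL\gamma}{Tmn}\,\E\norm{\tau}\spn$, using the exact trace/spectral duality $\sup_{\norm{\pi}\tno\le\gamma}\dotprod{\tau,\pi}=\gamma\norm{\tau}\spn$. The random matrix $\tau\in\reals^{n\times m}$ has independent entries, each a sum of $T$ Rademacher signs (mean $0$, variance $T$, fourth moment $O(T^2)$), so the random-matrix spectral-norm estimate of \cite{latala2005some} yields $\E\norm{\tau}\spn\le c\sqrt{T}(\sqrt{n}+\sqrt{m})\le c\sqrt{Tm}$, using $m\ge n$. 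Hence $\E[\Phi(S)]\le \frac{cL\gamma}{Tmn}\cdot c\sqrt{Tm}=\frac{cL\gamma}{n\sqrt{mT}}$, the first term; combining with the McDiarmid step finishes the proof.

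I expect the main obstacle to be the contraction step and its prerequisites: one must be careful that the scalar contraction inequality is legitimately applied with Rademacher signs indexed by the \emph{full} set of $Tmn$ coordinates — which is exactly what the mutual independence of the entries buys us — that the outer absolute value is handled (via the symmetric relaxation above, or an absolute-value variant of the contraction lemma), and that the loss need not vanish at a reference point, which is dealt with by centering $\ell$ and noting the residual Rademacher average $\frac{1}{Tmn}\E_\sigma|\sum_{t,i,j}\sigma_{t,i,j}\ell(0,B_t(i,j))|$ is $O(1/\sqrt{Tmn})$ and thus absorbed into the second term (for $\ell=\mse$ one has $\ell(0,y)\le 1$). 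The remaining ingredients — symmetrization, McDiarmid, trace--spectral duality, and Lata{\l}a's bound — are routine.
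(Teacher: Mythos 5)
Your proposal follows essentially the same route as the paper's proof: McDiarmid's inequality over the $Tmn$ independent entries, entry-level symmetrization with Rademacher signs, Ledoux--Talagrand contraction to strip the $L$-Lipschitz loss, trace/spectral-norm duality, and Lata{\l}a's bound (with Khinchine's inequality for the fourth moment) giving $\E\norm{\nu}\spn \le c\sqrt{Tm}$. The only difference is bookkeeping: the paper works with the one-sided supremum $\sup_\pi\paren{\ell(\pi,\cD_{\pnull})-\ell(\pi,S)}$ and union-bounds the two directions at the end --- thereby sidestepping the absolute-value/$\phi(0)=0$ subtlety in the contraction step that you correctly flag and handle --- whereas you carry the absolute value throughout.
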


\begin{proof}
 
Let $\Pi \subseteq [0,1]^{n \times m}$. Denote $\psi(S) := \sup_{\pi \in \Pi} (\ell(\pi,\cD_{\pnull}) - \ell(\pi,S))$.
By definition,
\begin{equation}\label{psi}
 \forall \pi \in \Pi,\quad\ell(\pi,\cD_{\pnull})\leq \ell(\pi,S)+\psi(S).
 \end{equation}
Note that the sample $S$ has $Tnm$ independent (though not identically distributed) entries. Thus, we upper-bound $\psi(S)$ with high probability using McDiarmid's inequality \cite{mcdiarmid1989method}, which states that if for every two samples $S,S'$ which differ by a single entry, $|\psi(S) - \psi(S')| \leq \alpha$, then
\[
\P \left[\psi(S)-\E[\psi(S)] \geq \epsilon\right]\leq e^{\frac{-2\epsilon^{2}}{Tnm\alpha^{2}}}.
\]
To bound $\alpha$, consider two samples, $S,S'$, where the matrices in $S$ are denoted $B_t$ and the matrices in $S'$ are denoted $B'_t$. Suppose that $S'$ differs from $S$ by a single entry $(i,j)$ in the matrix $B_{t_o}$, such that $B_{t_o}(i,j) = 1$ and $B'_{t_o}(i,j) = 0$. We have 
\begin{align*}
&|\psi(S)-\psi(S')|= \\
&|\sup_{\pi}(\ell(\pi,\cD_{\pnull}) - \ell(\pi,S)) - \sup_{\pi}(\ell(\pi,\cD_{\pnull}) - \ell(\pi,S'))|
\end{align*}

For bounded    
 $f,g:\reals\rightarrow \reals$, it holds that 
 $|\sup_{x}f(x)-\sup_{x}g(x)|\le|\sup_{x}(f(x)-g(x))|$.
Therefore
\begin{align*}
&|\psi(S)-\psi(S')| \leq \frac{1}{T}|\sup_{\pi} \sum_{t}(\ell(\pi,B_t) - \ell(\pi,B'_t))| \\
&= \frac{1}{Tmn}|\sup_{\pi}
(\ell(\pi(i,j),B_{t_o}(i,j))  - \ell(\pi(i,j),B'_{t_o}(i,j))|\\
&\leq \frac{L}{Tmn}|B_{t_o}(i,j)-B'_{t_o}(i,j)]| \leq \frac{L}{Tmn},
\end{align*}
Where we used the fact that $\ell$ is $L$-Lipschitz in the second argument.
Setting $\alpha = L/(Tnm)$ and applying McDiarmid's inequality, we get 
 \[
 \P[\psi(S)-\E[\psi(S)]\geq \epsilon]\leq  e^{-2\epsilon^2 Tmn/L^2}.
 \]
 Setting the latter to
 $\frac{\delta}{2}$, we obtain that
 with a probability of at least $1-\frac{\delta}{2}$,
\begin{equation}
\label{hi-prob}
 \psi(S) \leq \E[\psi(S)]+L\sqrt{\frac{\ln(\frac{2}{\delta})}{2Tmn}}.
\end{equation} 

It remains
to bound $\E[\psi(S)]$. 
 Let $S' = (B'_1,\ldots,B'_T) \sim \cD_\pnull^T$ be a sample which is independent from $S$.
Following a standard symmetrization argument as in \cite{PU_Learning_for_Matrix_Completion}, we have 
\begin{align*}
& \E[\psi(S)] =\E_S[\sup_{\pi}(\ell(\pi,\cD_{\pnull})-\ell(\pi,S))]]\\
&= \E_{S}[\sup_{\pi}(\E_{S'}[\ell(\pi,S')-\ell(\pi,S)])]\\
&\leq \E_{S,S'}[\sup_{\pi}(\ell(\pi,S')-\ell(\pi,S))]\\
&= \frac{1}{mnT}\E_{S,S'}\big[ \sup_{\pi}\sum_{i,j,t}\ell(\pi(i,j),B'_t(i,j))\\
&\qquad\qquad-\ell(\pi(i,j),B_t(i,j))\big].
\end{align*}
Letting $\sigma_{i,j,t} \sim \mathrm{Uniform}\{-1,1\}$ for $i\in [n],j\in[m],t\in[T]$ be independent Rademacher variables, it follows that
\begin{align}
\nonumber
 &\E[\psi(S)] =\\
 &\frac{1}{mnT}\E_{S,S',\sigma}\Big[ \sup_{\pi} \sum_{i,j,t}\sigma_{i,j,t}\big(\ell(\pi(i,j),B'_t(i,j))\notag\\
&\qquad\qquad\qquad-\ell(\pi(i,j),B_t(i,j)) \big)\Big]\\
&\leq \frac{2}{mnT}\E_{S,\sigma}\left[ \sup_{\pi}\sum_{i,j,t}\sigma_{i,j,t}\ell(\pi(i,j),B_t(i,j))\nonumber\right]\\
& \leq \frac{2L}{mnT}\E_{\sigma}\left[ \sup_{\pi(i,j)}\sum_{i,j,t}\sigma_{i,j,t}\pi(i,j). \label{psi-rade}\right]
\end{align}
The last inequality follows from Talagrand's contraction principle \citep{LedouxTal91},
together with the fact that $\ell(x,y)$ is $L$-Lipschitz in
both arguments. 

Now, denoting $\nu_{i,j} := \sum_t \sigma_{i,j,t}$,
we observe that
\begin{align*}
&\E_{\sigma}\left[ \sup_{\pi}\sum_{i,j,t}\sigma_{i,j,t}\pi(i,j)\right]=\\
&\E_{\sigma}\left[ \sup_{\pi}\sum_{i,j}\pi(i,j)\sum_{t}\sigma_{i,j,t}\right]=\E_{\sigma}\left[ \sup_{\pi}\sum_{i,j}\pi(i,j)\nu_{i,j}\right]
\end{align*}
Since the nuclear and spectral norms are dual \citep{MR2978290}, a matrix H\"older inequality holds, where $\nu$ is the matrix with entries $\nu_{i,j}$: 
\[
\sum_{i,j}\pi(i,j)\nu_{i,j}\leq\norm{\pi}\tno\norm{\nu}_{\spn}.
\]
Therefore, combining with \eqref{psi-rade}, and assuming $\sup_{\pi \in \Pi} \norm{\pi}\tno \leq \gamma$, it follows that
 \begin{equation}\label{boundnu}
\E[\psi(S)] \leq \frac{2L}{mnT}\E_{\sigma}[\sup_{\pi \in \Pi} \norm{\pi}\tno\norm{\nu}_{\spn}] \leq  \frac{2L\gamma}{mnT}\E_{\sigma}[\norm{\nu}_{\spn}].
\end{equation}

It remains to bound $\E_\sigma\left[ ||\nu||_{\spn}\right]$.
To this end, recall the following result of \citet{latala2005some}:
There is a universal constant $c>0$ such that for any random matrix 
$Z\in \reals^{n\times m}$ with independent mean-zero entries, we have
\begin{align*}
 &\E\left[ ||Z||_{\spn}\right]\leq \\
 &c \Big(\max_{i}\sqrt{\sum_{j}\E\left[  Z(i,j)^{2}\right]}\\
&\qquad +\max_{j}\sqrt{\sum_{i}\E \left[ Z(i,j)^{2}\right]}+ \sqrt[4]{\E \left[  Z(i,j)^{4}\right]}\Big).
 \end{align*}
 \normalsize
Now $\nu_{i,j}$ is a sum of $T$ i.i.d.~Rademacher random variables, and thus $\E[\nu_{i,j}]=0$
and $\E[\nu_{i,j}^2]=T$.
To bound the $4$th moment, we appeal to Khinchine's inequality \citep[Exercise 5.10]{MR3185193}, which states that $\E[\nu_{i,j}^{4}]
\leq
\frac{4!}{2\cdot2^{2}}
\E[\nu_{i,j}^{2}]^2
=
3T^{2}
.
$
Substituting these moment bounds into Lata{\l}a's bound, we get
 \[
\E \left[ ||\nu||_{\spn}\right]\leq c\paren{\sqrt{nT}+\sqrt{mT}+ \sqrt[4]{nm}\sqrt{T}}.
 \]
Since $\sqrt[4]{nm}\leq \frac{1}{2}(\sqrt{n}+\sqrt{m})$ and $m \geq n$,
we have $ \E\left[ ||\nu||_{\spn}\right]\leq c\sqrt{Tm}$.  
Substituting this into \eqref{boundnu}, we get
 \[
 \E[\psi(S)] \leq c\frac{L\gamma}{n\sqrt{mT}}.
 \]
 
 Combining this with \eqref{hi-prob} yields that
 with probability of at least $1-\frac{\delta}{2}$,
 \[
 \psi(S) \leq c\frac{L\gamma}{n\sqrt{mT}}+L\sqrt{\frac{\ln(\frac{2}{\delta})}{2Tmn}}.
 \]
 Combining this with \eqref{psi}
 and
 the observation that
\begin{align}
 &\forall \pi \in \Pi ~~
 \ell(\pi,\cD_{\pi})\\
 &\leq \ell(\pi,S)+\sup_{\pi' \in \Pi}(\ell(\pi',\cD_{\pi})-\ell(\pi',S)) \\
 &\leq \ell(\pi,S)+\psi(S)
\end{align}
 ,
we get that if $\sup_{\pi \in \Pi} \norm{\pi}\tno \leq \gamma$, then 
with probability of at least $1-\frac{\delta}{2}$,
 \[
 \forall \pi \in \Pi, \quad \ell(\pi,\cD_{\pnull})-\ell(\pi,S)\leq c\frac{L\gamma}{n\sqrt{mT}}+L\sqrt{\frac{\ln(\frac{2}{\delta})}{2Tmn}}.
 \]
 An analogous argument yields the same bound for $\ell(\pi,S)-\ell(\pi,\cD_\pnull)$, which implies the statement of the theorem via a union bound.
\end{proof}
 
\end{document}